\documentclass[copyright,creativecommons]{eptcs} 
\usepackage{underscore}           

\usepackage{mathtools,mathpartir,amsmath,amssymb}
\usepackage{hyperref}
\usepackage{keyval}
\usepackage{prooftree}
\usepackage[all]{xy}
\usepackage{dsfont}
\usepackage{tikz}                  
\usepackage{enumerate} 
\usepackage{placeins}                 
\usetikzlibrary{shadows,arrows,shapes,automata,positioning,decorations.markings,fit}
\usepgflibrary{arrows.spaced}
\usepackage{wrapfig}
\usepackage{breakurl}
\usepackage[misc,geometry]{ifsym}
\tolerance=10000

\usepackage[algoruled,vlined]{algorithm2e}
\SetKwProg{Fn}{Function}{}{}

\usepackage[nomargin,inline,index,%
]{fixme}

\usepackage[shortlabels]{enumitem}
\usepackage{makeidx}
\usepackage{color}
\usepackage{soul}

\usepackage{mathpartir,amsmath,amssymb
}


 \newtheorem{theorem}{Theorem}[section]
\newtheorem{definition}[theorem]{Definition}
 
 \newtheorem{lemma}[theorem]{Lemma}
 \newtheorem{example}[theorem]{Example}
 
 \newtheorem{fact}[theorem]{Fact}
 \newtheorem{remark}[theorem]{Remark}
\newenvironment{proof}{{\em Proof.}}{\hfill$\square$\medskip}

\newcommand{\refToDef}[1]{Definition~\ref{#1}}

\newcommand{\inact}{\ensuremath{\mathbf{0}}}
\newcommand{\coDefGr}{::=_\rho}
\newcommand{\oup}[5]{#1!\set{#4_#2.#5_#2}_{#2\in#3}}
\newcommand{\inp}[5]{#1?\set{#4_#2.#5_#2}_{#2\in#3}}

\newcommand{\PP}{\ensuremath{P}}
\newcommand{\PS}{\ensuremath{S}}
\newcommand{\PU}{\ensuremath{U}}
\newcommand{\Q}{\ensuremath{Q}}
\newcommand{\PR}{\ensuremath{R}}
\newcommand{\R}{\ensuremath{R}}
\newcommand{\pp}{{\sf p}}
\newcommand{\q}{{\sf q}}
\newcommand{\pr}{{\sf r}}
\newcommand{\ps}{{\sf s}}
\newcommand{\pu}{{\sf u}}

\newcommand{\la}{\lambda}
\newcommand{\G}{\ensuremath{{\sf G}}}
\newcommand{\End}{\sf{End}}
\newcommand{\agtO}[6]{#1#2!\{#5_#3.#6_#3\}_{#3\in#4}}
\newcommand{\agtI}[6]{#1#2?#5.#6}

  \newcommand{\rn}[1]{{[\textsc{#1}]}}

 \newcommand{\pP}[2] {#1[\![\,#2\,]\!]}
 \newcommand{\Nt}{{\mathbb{N}}} 
  \newcommand{\parN}{\mathrel{\|}}
  \newcommand{\plays}[1]{\ensuremath{{\sf Prt}(#1)}}
  \newcommand{\Plays}[1]{\ensuremath{{\sf Plays}(#1)}}
  \newcommand{\set}[1]{\{#1\}}
   \newcommand{\Set}[1]{\set{#1}}
  
  \newcommand{\Msg}{\mathcal{M}} 
  \newcommand{\mq}[3]{\langle#1,#2,#3\rangle}
  \newcommand{\parG}{\mathrel{\|}}
  \newcommand{\addMsg}[2]{#1\cdot #2}

  \newcommand{\val}{v}
  \newcommand{\confAs}[2]{#1\parN#2}
  \newcommand{\stackred}[1]{\xrightarrow{#1}}
  \newcommand{\CommAs}[3]{#1#3!#2}
  \newcommand{\CommAsI}[3]{#1#3?#2}
  
  \newcommand{\asCom}{\beta}
  
  \newcommand{\comseqA}{ \tau }
   \newcommand{\mypath}{\comseqA}
  
   \newcommand{\ee}{\epsilon}
\newcommand{\concat}[2]{\ensuremath{#1\,{\cdot}\,#2}}

\newcommand{\play}[1]{\ensuremath{{\sf play}(#1)}}
\newcommand{\Seq}[2]{#1.#2}
\newcommand{\NamedRule}[5][]{ \Infer[#1]{#2}{ #3 }{#4}{#5} }
\newcommand {\Infer} [5] [] {
  \inferrule*[%
    left={\textsc{#2}},%
    right={$\begin{array}{l} {#5} \end{array}$}, 
    vcenter,%
    #1
  ]%
  {#3}{#4}}
  \newcommand{\co}{\beta}
   
  \newcommand{\ipth}{\xi}

\newcommand{\IPaths}[1]{{\sf Paths}(#1)}
\newcommand{\weight}{\ensuremath{{\sf depth}}}
\newcommand{\agtoneO}[4]{\Seq{#1#2!#3}{#4}}
\newcommand{\agtIS}[3]{#1#2?#3}
\newcommand{\agtSOS}[3]{#1#2!#3}
\newcommand{\Hset}{\mathcal{H}}

\newcommand{\wgs}[2]{\ensuremath{{\sf weight}(#2,#1)}}

\newcommand{\pair}[2]{\ensuremath{(#1,#2)}}

 \newcommand{\tyng}[3]{\vdash_{#1} #3:#2}
 \newcommand{\pas}{\mathcal{P}}

 \newcommand{\msg}[1]{\text{\footnotesize \sc #1}}

\newcommand{\outpone}[3]{#1!\Seq{#2}{#3}}

\newcommand{\M}{\lambda}
 
 \newcommand{\Cline}[1]{\centerline{$ #1 $}} 
 \newcommand{\ClineL}[1]{\vspace{2.2mm}\centerline{$ #1 $}\vspace{2.2mm}} 

\begin{document}

\title{Partial Typing for Asynchronous Multiparty Sessions}
\author{Franco Barbanera
\thanks{
Partially supported by 
Project ``National Center for HPC, Big Data e Quantum Computing",  Programma M4C2, Investimento 1.3 – Next Generation EU.
}
\institute{
Dipartimento di Matematica e Informatica,
Universit\`a di  Catania, Catania, Italy}
\email{franco.barbanera@unict.it}
\and
Mariangiola Dezani-Ciancaglini
\qquad \qquad\qquad 
Ugo de'Liguoro
\thanks{Partially supported by Project INDAM-GNCS ``Fondamenti di Informatica e Sistemi Informatici".
}
\institute{
Dipartimento di Informatica,
Universit\`a di Torino, Torino, Italy}
\email{\{dezani,deligu\}@di.unito.it}
}

\def\titlerunning{Partial Typing for Asynchronous  Multiparty Sessions} 

\def\authorrunning{ Barbanera \& Dezani-Ciancaglini \& de'Liguoro
}

\maketitle

\begin{abstract}
Formal verification methods for concurrent systems cannot always be
scaled-down or tailored in order to be applied on specific subsystems.
We address such an issue in a MultiParty Session Types setting by devising
a {\em partial} type assignment system for multiparty sessions (i.e. sets of concurrent participants) with asynchronous communications. 
Sessions are possibly typed by ``asynchronous global types'' describing the overall behaviour of specific subsets of participants only (from which the word  
``partial''). Typability is proven to ensure  that sessions enjoy  the partial versions of the
well-known properties of lock- and orphan-message-freedom. 
 \end{abstract}
 
 {\em Keywords}: MultiParty Session Types, Asynchronous Communication, Lock-freedom.

  \section{Introduction}\label{intro}

When validating/verifying  distributed and concurrent systems, it is often natural to identify different subsystems
for which the properties we have to take into account are not those required for the whole system, if any.
The system of a social media, for instance, is made of users and services the former are provided with.
The users are the main concern of the social media, which hence tend to  ensure 
to the user subsystem
properties which cannot be (or need not to be)  ensured  
to the services. 
This particularly applies in case services are  managed by a second 
party not under direct control of the social media.
{\em Lock-freedom} is a relevant specimen of such properties.
It ensures that no lock is ever reached in the evolution of a system. 
A lock being a system’s reachable configuration where a still active participant is
forever prevented to perform any action in any possible continuation of the system\footnote{Actually several slightly different property are 
present in the literature under the name ``lock-freedom''.}.
In particular, such a configuration is called a $\pp$-lock in case the stuck participant  is  $\pp$. 
A social media would hence be focused on $\pp$-lock freedom for each $\pp\in\pas$, where $\pas$ is the set of
 users in the current example. As far as the users cannot get into a lock, the services can behave as they like best.
The social media can also be interested in that, in case of an asynchronous model of communication, the
messages exchanged among the users are eventually received. This is a {\em partial} version of the
property referred to in the literature as {\em orphan-message freedom}.
An investigation on verification of partial properties was  carried on in \cite{BD23} in the 
setting of MultiParty Session Types (MPST for short),  in particular in a {\em bottom-up} MPTS setting.
 Unlike formalisms using the notion of {\em projections}, the formalism in \cite{BD23} enables to exploit an
approach to the development and verification of distributed/concurrent system where systems (formalised here through the notion of ``network'', a parallel composition of  named  processes) are first developed
and then  subsequently proved sound with respect a specific overall description of the system's behaviour
by checking the network against a {\em global type}.  
The  MPST  type  system of \cite{BD23} 
derives judgements of the   shape

\ClineL{
\tyng{\pas}{\G}{\Nt}
}
\noindent
where $\pas$  is a  set of participants, $\Nt$ is a network and $\G$ is a global type. 
The typing is {\em partial} since some communications between participants in $\pas$ do not appear in the global type. Typing  $\Nt$ with $\G$
 does ensure that $(a)$ the communications of the participants in $\Nt$ not belonging 
to  $\pas$ comply with the interaction scenario represented by $\G$ and
$(b)$ $\Nt$ is $\pp$-lock-free for each  $\pp \not\in \pas$. 

In the present paper we push further the investigation  of~\cite{BD23}  by  treating 
an asynchronous
model of communication, instead of a synchronous one. 
Besides, we take into account also the partial version of the property of {\em orphan-message freedom}.
The calculus, the global types and the type system  we use are inspired by~\cite{CDG21,CDG22,DGD22}.

\bigskip

{\em Contributions and structure of the paper.} In Section~\ref{ms} we recall  from~\cite{CDG21}
the asynchronous calculus of  multiparty sessions. 
  Also, we adapt from~\cite{BD23} the notion of $\pas$-lock-freedom
   (the absence of locks is ensured here to the participants in $\pas$) 
   and introduce 
the novel notion of $\pas$-orphan-message freedom. 
An example is  given  to clarify the various notions and results.   
Section~\ref{ts} is devoted to the presentation of  (asynchronous)  global types 
 from~\cite{CDG21} and the introduction of our 
 ``partial'' type system, assigning global types to multiparty sessions, where some communications can be ignored.  
The relevant properties of  partially  typable sessions are proved in
Section~\ref{pr}. In particular Subject Reduction, Session Fidelity,
 $\pas$-lock-freedom and  $\pas$-orphan-message-freedom.
A section summing up our results, 
discussing related works and possible directions for future work concludes  the paper.

 \section{Multiparty Sessions}\label{ms}

The calculus of multiparty sessions, as well  as  global types, used in  the present paper
are inspired  by~\cite{CDG21}. 
The simplicity of the calculus with respect to the original  MPST calculus~\cite{HYC08} and of many of the subsequent ones, as well as the lack of explicit  channels,  enables  us to focus on our main concerns.  Besides,  it allows 
for 
a clear explanation of the type system we will introduce in the next section. 
All this has however the cost of preventing the representation of session interleaving and delegation.

We use the following base sets and notation: \emph{labels}, ranged
over by $\lambda,\lambda',\dots$; \emph{session participants}, ranged over
by $\pp,\q,\pr, \ps, \pu,\ldots$; \emph{processes}, ranged over by $\PP,\Q,\PR,\PS,\PU,\dots$;
\emph{networks}, ranged over by $\Nt,\Nt',\dots$;
\emph{queues}, ranged over by $\Msg,\Msg',\dots$;
\emph{integers}, ranged over by $i, j,l,h,k,
\dots$;   {\em  (finite)  integer sets}, ranged over by $I,J,L, H, K,\dots$.

  \begin{definition}[Processes]\label{p} 
  {\em Processes} are defined by:
  
\ClineL{\begin{array}{rcl}
\PP & \coDefGr  & 
\inact
\mid
\oup\pp{i}{I}{\la}{\PP}%
\mid
\inp\pp{i}{I}{\la}{\PP}%
\end{array}
}
where $I\neq\emptyset$ and $\la_h\neq\la_k$  for  $h, k\in I$ and $h\neq k$.  
\end{definition}

The symbol $ \coDefGr$, in the  above definition  
and in  other definitions, 
indicates that the productions of the grammar should be interpreted \emph{coinductively}.
 That is, they define possibly infinite processes.  
However, we assume such processes to be \emph{regular},  i.e. with 
finitely many distinct subprocesses. In this way, we only obtain processes which are solutions of 
 finite sets  of equations, see \cite{Cour83}.  
We choose this formulation  since 
it allows us to avoid explicitly handling variables, thus simplifying a lot the technical development. 

 A process of shape $\oup\pp{i}{I}{\la}{\PP}$ (\emph{internal choice}) chooses a  label  in the set  $\{\la_i\mid i\in I\}$ to be sent to $\pp$, and then behaves differently depending on the  label sent.  
 A process of shape $\inp\pp{i}{I}{\la}{\PP}$ (\emph{external choice}) waits for receiving one of the  labels  $\{\la_i\mid i\in I\}$ from $\pp$, and then behaves  as $\PP_i$ depending on the
 received label $\la_i$. 
 Note that  the set of indexes in choices is assumed to be non-empty, and the corresponding  labels  to be pairwise distinct.  
An internal choice which is a singleton is simply written $\outpone{\pp}{\la}{\PP}$;
 analogously for an external choice.   The process $\inact$ is inactive and  we omit trailing $\inact$. 
In a full-fledged calculus,  labels would carry values, 
namely they would be of shape $\la(\val)$.  For simplicity, here we
consider ``pure'' labels.  

 The {\em participants of  a process } are the senders and the receivers which occur in  the process itself.   Their set is  defined as the smallest set satisfying 
 
\ClineL{\begin{array}{c}\plays\inact=\emptyset\qquad\plays{\oup\pp{i}{I}{\la}{\PP}}=\plays{\inp\pp{i}{I}{\la}{\PP}}=\set\pp\cup\bigcup_{i\in I}\plays{\PP_i}\end{array}}


We use queues in order  to formalise a one-to-one asynchronous model of communication.
Instead of explicitly  defining  a queue for each possible sender and receiver, we use a 
single queue and equip the communicated labels with their sender and receiver names, so forming
triples that we dub {\em messages}.

\begin{definition}[Messages and Queues]
\begin{enumerate}[i)]
\item
 \emph{Messages} are triples of the form $\mq\pp{\la}\q$ denoting that participant $\pp$ is the sender of  label $\la$  to  the receiver $\q$. 
\item
{\em Message queues} (queues for short) are defined by the following grammar:

\ClineL{
\Msg::=\emptyset \mid
  \addMsg{\mq\pp{\la}\q}{\Msg}
} 
\end{enumerate}
  \end{definition}
  Sent messages are stored in a queue,  from which  they
are subsequently fetched by the receiver.

The order of messages in the queue is the order in which they will be
read. Since 
order matters only 
between messages with
the same sender and receiver, we  always  consider message queues modulo the  following  structural equivalence:\[\addMsg{\addMsg{\Msg}{\mq\pp{\la}\q}}{\addMsg{\mq\pr{\la'}\ps}{\Msg'}}\equiv
  \addMsg{\addMsg{\Msg}{\mq\pr{\la'}\ps}}{\addMsg{\mq\pp{\la}\q}{\Msg'}}
  ~~\text{if}~~\pp\not=\pr~~\text{or}~~\q\not=\ps 
\]
Note, in particular, that
$\addMsg{\mq\pp{\la}\q}{\mq\q{\la'}\pp} \equiv
\addMsg{\mq\q{\la'}\pp}{\mq\pp{\la}\q}$. These two
equivalent queues represent a situation in which both participants
$\pp$ and $\q$ have sent a label 
to the other one, and neither of
them has read the message. This  case  may
happen in a multiparty session with asynchronous communication. 

 The {\em participants of queues} are the senders and the receivers which occur in the queue, i.e.
 
\ClineL{\plays\emptyset=\emptyset\qquad\plays{\mq\pp\la\q\cdot\Msg}=\set{\pp,\q}\cup\plays\Msg}

 A multiparty sessions is  comprised of  a network,
 i.e.  a number of pairs  participant/process  
of shape $\pP{\pp}{\PP}$ composed  in parallel,  each with a different participant $\pp$, and a message queue.

\begin{definition}[Networks and Sessions] 
\begin{enumerate}[i)]
\item
{\em  Networks} are defined as finite parallel composition of named processes, namely

\ClineL{ \Nt = \pP{\pp_1}{\PP_1} \parN
\cdots \parN \pP{\pp_n}{\PP_n}
} 

\noindent where
 $\pp_h \neq \pp_k$  and $\pp_h\not\in\plays{\PP_h}$ for any $1\leq h \neq k\leq n$. 
 \item
{\em  Sessions} are defined as pairs of networks and message queues of the following form: 

\ClineL{\Nt \parallel \Msg}

\end{enumerate}
\end{definition}
 The condition $\pp_h\not\in\plays{\PP_h}$ forbids self-messages. 

We assume the standard structural congruence on  networks  
(denoted $\equiv$), that is we consider sessions modulo permutation of components and adding/removing components of  the  shape $\pP\pp\inact$. 

 If $\PP\neq\inact$ we write $\pP{\pp}{\PP}\in\Nt$ as short for $\Nt\equiv  \pP{\pp}{\PP} \parN\Nt'$ for some $\Nt'$.
This abbreviation is justified by the associativity and commutativity of $\parN$.

 The {\em participants of networks} are the participants which occur in processes, i.e.
 
\ClineL{\plays\Nt=\bigcup_{\pP{\pp}{\PP}\in\Nt}\set{\plays\PP}}

The {\em players of networks} are the  participants 
associated with active processes, i.e.
 
\ClineL{\Plays\Nt=\set{\pp\mid\pP{\pp}{\PP}\in\Nt}}

To define the asynchronous operational semantics of
sessions, we use an LTS whose labels record  the outputs and the inputs. 

\begin{definition}[Asynchronous Operational Semantics]
We equip sessions with the (asynchronous) {\em operational semantics} specified by the LTS of
Figure \ref{fig:asynprocLTS}. Transitions are labelled with 
 \emph{communications}  (ranged over by $\beta$) which are either  the asynchronous emission of a
label  $\M$ from participant $\pp$ to participant $\q$ (notation $\CommAs{\pp}{\M}{\q}$) or 
the actual reading  by participant $\pp$
of the  label  $\M$ sent by participant $\q$ (notation $\CommAsI{\pp}{\M}{\q}$).  
\end{definition}

 \begin{figure}
 \Cline{
\begin{array}{c}  
\\[5pt]
\confAs{\pP{\pp}{\oup\q{i}{I}{\la}{\PP}}\parN\Nt}{\Msg} \stackred{\CommAs\pp{\la_h}{\q}}
  \confAs{\pP{\pp}{\PP_h}\parN\Nt}{\addMsg{\Msg}{\mq\pp{\la_h}{\q}}}\quad \text{ where  }\ 
   h \in I \quad
   {~~~~~~\rn{Send}}
   \\[10pt]
\confAs{\pP{\pp}{\inp\q{i}{I}{\la}{\PP}}\parN\Nt}{\addMsg{\mq{\q}{\la_h}\pp}{\Msg}}\stackred{\CommAsI\pp{\la_h}{\q}}
 \confAs{\pP{\pp}{\PP_h}\parN\Nt}{\Msg}\quad  \text{ where  }\ 
 h \in I  \quad
  {~~~~~~\rn{Rcv}}
   \\[3pt]
\end{array}
} 
\caption{LTS for sessions.}\label{fig:netredAs}\label{fig:asynprocLTS}
\end{figure}

Rule \rn{Send}  in  Figure \ref{fig:asynprocLTS} allows a
participant $\pp$ with an internal choice (a sender) to send one of
its possible  labels  $\la_h$, by adding  the corresponding message  to the
queue. Symmetrically, Rule \rn{Rcv} allows a participant  $\pp$
with an external choice (a receiver) to read the first message 
in the queue sent  to her by a  given  participant $\q$,  if  its label $\la_h$  is one of those she is waiting for.

The {\em players of
   communications} are the senders for the outputs and the receivers for the inputs, i.e. we define
   
\ClineL{
 \play{\CommAs{\pp}{\la}{\q}}=
    \play{\CommAsI{\pp}{\la}{\q}}=\pp}

As usual we define (possibly empty) sequences of communications as traces.
\begin{definition}[Traces]\label{tra}  (Finite) traces are defined by $
\comseqA:=\ee\mid\concat\beta\comseqA$.
\end{definition}
\noindent
When $\comseqA=\concat{\beta_1}{\concat\ldots{\beta_n}}$ ($n\geq 1)$
we write $\Nt\parN\Msg\stackred{\comseqA}\Nt'\parN\Msg'$ as short for

\ClineL{\Nt\parN\Msg\stackred{\beta_1}\Nt_1\parN\Msg_1\cdots\stackred{\beta_n}\Nt_{n}\parN\Msg_{n} 
 =  \Nt'\parN\Msg'}

 With $\Nt\parN \Msg\not\rightarrow$ we denote that the session $\Nt\parN \Msg$ is stuck.

 \begin{example}[A  social media 
 session]\label{smex}
{\em
A social network has two users ($\pu_1$ and $\pu_2$) that want to  interact  
using a service $\ps$. The users exchange messages \msg{go}
and \msg{stop} communicating when they like to continue or not  their interaction.   
They ``should'' \msg{Req}uest \msg{Data} to the service only when they  both are willing to do.  
The above system is roughly described (disregarding the logical order of messages)  in Figure~\ref{fig:compsyst}.
A multiparty session corresponding to this system is the following.

\ClineL{
\pP{\pu_1}{\PU_1} \parN \pP{\pu_2}{\PU_2} \parN \pP{\ps}{\PS} \parN \emptyset
}

\ClineL{
\PU_1 =\pu_2! \left\{\begin{array}{l}
                            \\[-7mm]
                       \msg{go}.\pu_2?\left\{\begin{array}{l}
                              \msg{go}.\ps!\msg{req}.\ps?\msg{data}.\PU_1\\
                              \msg{stop}.\ps!\msg{req}
                                      \end{array}
                            \right.\\
                       \msg{stop}.\pu_2?\left\{\begin{array}{l}
                              \msg{go}\\
                              \msg{stop}
                                      \end{array}
                            \right.\\
                       \end{array}\right.
}

\ClineL{\PU_2 =\pu_1! \left\{\begin{array}{l}
                            \\[-7mm]
                       \msg{go}.\pu_1?\left\{\begin{array}{l}
                              \msg{go}.\ps!\msg{req}.\ps?\msg{data}.\PU_2\\
                              \msg{stop}
                                      \end{array}
                            \right.\\
                       \msg{stop}.\pu_1?\left\{\begin{array}{l}
                              \msg{go}\\
                              \msg{stop}
                                      \end{array}
                            \right.\\
                       \end{array}\right.
}

\ClineL{
 \PS = 
   \pu_2?\msg{req}.\pu_1?\msg{req}. \pu_1!\msg{data}.\pu_2!\msg{data}.\PS 
}\noindent

\noindent
where both participants start sending messages, a feature which  typically can be dealt  only thanks to asynchronous communication.   
The behaviours of $\pu_1$ and $\pu_2$ only differ in  that the process $\PU_1$, after sending \msg{go} to $\pu_2$ and receiving \msg{stop} from $\pu_2$,  sends a $\msg{req}$  to the service.   
So the process $\PU_1$ does not  precisely  implement the prescribed behaviour, while $\PU_2$ does.

\begin{figure*}[t]\centering
 \hspace{0mm}$
\begin{tikzpicture}[node distance=3.2cm,scale=.85,transform shape]
        \node (square-u1) [draw,minimum width=1cm,minimum height=1cm] {\Large $\pu_1$};
        \node (square-s) [draw,minimum width=1cm,minimum height=1cm,  below right of=square-u1, xshift=1.8cm, yshift=8mm] {\Large $\ps$};
        \node (square-u2) at (1,0) [draw,minimum width=1cm,minimum height=1cm,  below of=square-u1, yshift=3mm] {\Large $\pu_2$};

        \path  (square-u1)[-stealth,bend left = 10]  edge node[yshift=0.1cm,right, pos=0.2] {$\msg{req}$} (square-s) 
                   (square-s)[-stealth,bend left = 10]  edge node[yshift=-0.1cm,left, pos=0.2]  {$\msg{data}$} (square-u1)
                    (square-u2)[-stealth,bend left = 10]  edge node[yshift=0.1cm,left, pos=0.3] {$\msg{req}$} (square-s)
                  (square-s)[-stealth,bend left = 10]  edge node[yshift=-0.1cm,right, pos=0.3]  {$\msg{data}$} (square-u2)
                  (square-u1)[-stealth,bend right = 20,pos=0.3]  edge node[right, pos=0.4] {$\msg{go/stop}$} (square-u2)
                   (square-u2)[-stealth,bend left = 70]  edge node[left,pos=0.3]  {$\msg{go/stop}$} (square-u1)       
        ;
    \end{tikzpicture}
$
   \caption{\label{fig:compsyst}Representation of the session of Example~\ref{smex}.}
\end{figure*}
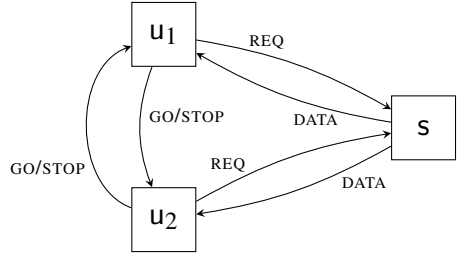

}
\end{example}

\subsection{Partial Communication Properties}

 Now, we define  the property of  $\pas$-lock-freedom.  This property was first introduced in~\cite{BD23}, where $\pas$ was the set of participants whose  lock-freedom we don't care about. 
 $\pas$-lock-freedom  is  a ``partial'' version of  the standard lock-freedom~\cite{Kobayashi02,Padovani14}. 
 The latter consists in the possibility of  completion 
of pending communications of any participant
(this can be alternatively stated by saying that any participant is lock-free).
We are interested instead in the  progress of some explicitly specified participants only.

\begin{definition}[$\pas$-lock-freedom]\label{d:lf}
\begin{enumerate}[i)]
\item 
 A  multiparty session $\Nt\parN\Msg$ is $\pp$-\emph{lock-free} 
if\\ $\Nt\parN\Msg\stackred{\comseqA}\Nt'\parN\Msg'$ and $\pP{\pp}{\PP}\in\Nt'$ imply  
$\Nt'\parN\Msg'\stackred{\concat{\comseqA'}\beta}$
	 for some $\comseqA'$ and $\beta$\\ such that $\pp\in\play\beta$.
	 \item
 A   
 multiparty session  $\Nt\parN\Msg$ is $\pas$-\emph{lock-free} if  it is $\pp$-\emph{lock-free} for each 
$\pp\in\pas$. 
	 \item  A   
	 multiparty session 
	 \emph{$\Nt\parN\Msg$ is a lock-free session} if it is $\pp$-\emph{lock-free} for each 
	  $\pp\in\Plays\Nt$. 
\end{enumerate}
\end{definition}

It is natural to extend also the usual notion of  Deadlock-freedom  to our setting.
\begin{definition}[$\pas$-deadlock-freedom]
  A  
  multiparty session $\Nt\parN \Msg$ is a {\em $\pas$-deadlock-free} session if
	 $\Nt\parN \Msg\stackred{\mypath}\Nt'\parN \Msg'\not\rightarrow$  implies  
	 $\pp\not\in\Plays{\Nt'}$   for any $\pp\in\pas$.
\end{definition}

It is immediate to check that, as for standard  Lock- and Deadlock-freedom,  the following 
hold.
\begin{fact}
$\pas$-lock-freedom  implies $\pas$-deadlock-freedom. 
\end{fact}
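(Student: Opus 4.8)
The plan is to unfold the definitions and show the contrapositive-free implication directly: assume $\Nt\parN\Msg$ is $\pas$-lock-free and prove it is $\pas$-deadlock-free. So suppose $\Nt\parN\Msg\stackred{\mypath}\Nt'\parN\Msg'\not\rightarrow$ and take any $\pp\in\pas$; the goal is $\pp\not\in\Plays{\Nt'}$. Toward a contradiction, assume $\pp\in\Plays{\Nt'}$, i.e.\ $\pP{\pp}{\PP}\in\Nt'$ for some process $\PP$ (necessarily $\PP\neq\inact$ by the convention identifying $\pP\pp\inact$ with the empty network). Since $\Nt\parN\Msg$ is $\pas$-lock-free and in particular $\pp$-lock-free, instantiating \refToDef{d:lf}(i) with the trace $\mypath$ witnessing $\Nt\parN\Msg\stackred{\mypath}\Nt'\parN\Msg'$ yields some $\comseqA'$ and $\beta$ with $\Nt'\parN\Msg'\stackred{\concat{\comseqA'}\beta}$ and $\pp\in\play\beta$.

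The key remaining step is to derive a contradiction with $\Nt'\parN\Msg'\not\rightarrow$. The point is that $\Nt'\parN\Msg'\stackred{\concat{\comseqA'}\beta}$ means there is a nonempty sequence of transitions starting from $\Nt'\parN\Msg'$ (it has length at least one, since it ends with the communication $\beta$ and begins — if $\comseqA'=\ee$ — already with $\beta$). Hence $\Nt'\parN\Msg'$ performs at least one transition, i.e.\ $\Nt'\parN\Msg'\rightarrow$, directly contradicting $\Nt'\parN\Msg'\not\rightarrow$. I would spell out the edge case $\comseqA'=\ee$ explicitly (there the first and only transition is the one labelled $\beta$), and for $\comseqA'\neq\ee$ note that the very first transition of the trace $\concat{\comseqA'}\beta$ already witnesses $\Nt'\parN\Msg'\rightarrow$. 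Either way we contradict the assumption that $\Nt'\parN\Msg'$ is stuck, so $\pp\notin\Plays{\Nt'}$, and since $\pp\in\pas$ was arbitrary this is exactly $\pas$-deadlock-freedom.

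The main (and only) subtlety worth flagging is the interaction between the structural congruence conventions and the statements $\pP\pp\PP\in\Nt'$ versus $\pp\in\Plays{\Nt'}$: one must be careful that $\pp\in\Plays{\Nt'}$ genuinely gives a \emph{non-inactive} process at $\pp$ so that \refToDef{d:lf}(i) applies in the form with $\pP\pp\PP\in\Nt'$. This is immediate from the definition $\Plays\Nt=\set{\pp\mid\pP\pp\PP\in\Nt}$ together with the abbreviation convention (stated just before the definition of participants of networks) that $\pP\pp\PP\in\Nt$ presupposes $\PP\neq\inact$. Everything else is a one-line chase through \refToDef{d:lf} and the definition of $\not\rightarrow$, so I do not expect any real obstacle; the proof is essentially a paragraph.
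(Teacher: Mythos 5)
Your proof is correct and is exactly the ``immediate to check'' argument the paper has in mind (the paper states this Fact without proof): unfold \refToDef{d:lf}, note that $\pp\in\Plays{\Nt'}$ gives a non-inactive $\pP\pp\PP\in\Nt'$, and observe that the guaranteed trace $\concat{\comseqA'}\beta$ is nonempty, contradicting $\Nt'\parN\Msg'\not\rightarrow$. No gaps.
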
 
Trivially, as for the standard versions of the properties, the vice versa does not hold whenever $\pas\neq\emptyset$.

\begin{definition}[$\pas$-orphan-message-freedom]\label{def:pomf}\hfill
\begin{enumerate}[i)]
\item\label{def:pomf1}
A  multiparty session $\Nt\parN\Msg$ is $\pp\q$-\emph{orphan-message-free} if
 $\Nt\parN\Msg\stackred{\comseqA}\Nt'\parN\mq\pp\la\q\cdot\Msg'$ implies $\Nt'\parN\mq\pp\la\q\cdot\Msg'\stackred{\comseqA'\cdot\q\pp?\la}$ for some $\comseqA'$.
 \item\label{def:pomf2}
 A  multiparty session $\Nt\parN\Msg$ is $\pas$-\emph{orphan-message-free} if
 it is $\pp\q$-{orphan-message-free} for each pair of participants $\pp,\q\in\pas$.
 \item\label{def:pomf3}
 A  multiparty session $\Nt\parN\Msg$ is \emph{orphan-message-free} if
 it is  $\Plays{\Nt}\cup\plays{\Nt}\cup\plays{\Msg}$-{orphan-message-free}. 
 \end{enumerate}
\end{definition}
 Point~(\ref{def:pomf3} of  previous  definition is justified by the example $\Nt=\pP\pp{\q!\lambda}\parN\emptyset\stackred{\pp\q!\la} \pP\pp\inact\parN\mq\pp\la\q$, where the message $\mq\pp\la\q$ is orphan and $\pp\in\Plays\Nt$, $\q\in\plays\Nt$. 

\begin{example}[Partial properties for the social media example]
\em
It is not difficult to check that the session of Example~\ref{smex} is neither lock-free nor
orphan-message-free.
In fact we get an  $\ps$-lock 
whenever  at least one among  $\pu_1$ and  $\pu_2$ sends to  the other the message $\msg{stop}$.  In such a case the process
of  $\ps$  is not $\inact$, but unable to perform the input action it is willing to do.
An orphan message does result present in the queue because of a ``programming error'':
in case   $\pu_1$  sends $\msg{go}$ to $\pu_2$, receives $\msg{stop}$ from $\pu_2$ and then sends $\msg{req}$ to the server, 
 it happens that  such a $\msg{req}$ from  $\pu_1$ will never be received by $\ps$, since 
 a $\msg{req}$ from $\pu_2$ should be received first, but such a message will never be sent.  

The social network, however, is interested in the absence of locks for the $\set{\pu_1,\pu_2}$ subsystem only  (i.e. $\set{\pu_1,\pu_2}$-lock-freedom) 
as well in the absence of orphan-messages only for  the messages  exchanged among $\pu_1$ and $\pu_2$  (i.e. $\set{\pu_1,\pu_2}$-orphan-message-freedom). 
\end{example}

 \section{Global Types and Type System 
 }\label{ts}

The vast majority of global types used in the literature are independent of the
synchronicity/asynchronicity of the underlying communication model.
 This means that, in a global type, the exchange of a message $\mathsf{m}$ from a
participant $\mathtt{A}$ to a participant $\mathtt{B}$ is generally represented by something 
like $\mathtt{A}\stackrel{\mathsf{m}}{\longrightarrow}\mathtt{B}$. This is then interpreted
either as the synchronous exchange of $\mathsf{m}$ 
according to a handshaking protocol between   $\mathtt{A}$ and $\mathtt{B}$ or as 
the simultaneous representation of two distinct asynchronous actions: the insertion of  
$\mathsf{m}$ in a communication medium (typically a queue or a bag) and the acquisition
of the message from that. 
In \cite{CDG21,CDG22,DGD22} global types are instead strictly tailored for 
asynchronous interactions: 
the separate output and input actions, which together form an  asynchronous  communication 
 (respectively $\pp\q!\lambda$ and $\pp\q?\lambda$ in our formalism, see below), 
are made visible in the global type.
Even if this is actually more than  what  a choreographic formalism should require
 (our one can in fact hardly be considered a choreographic formalism in the usual sense),
it allows the global types to be used in a type assignment system for
asynchronous processes guaranteeing relevant (partial, in our case) communication properties.
 Being the asynchrony of communication syntactically evident in the global type,
the formal verification of such properties can be performed without having to consider a layer of ``semantic'' interpretation of the types, so maintaining the complexity of proofs at the same
complexity level as those for synchronous formalisms like the one in \cite{BD23}.

\begin{definition}[Asynchronous Global  Types] 
\label{gt}
 \emph{(Asynchronous) global types} 
$\G$ are defined by the following grammar:
 
\ClineL{\begin{array}{rcl}
\G & \coDefGr &    \agtO{\pp}{\q}i I{\la}{\G}
              \mid \agtI \pp\q i I \la \G   
              \mid \End
\end{array}}

\noindent
where $I\neq\emptyset$, $\pp\neq \q$ and
$\la_h\not=\la_k\,$ 
for $h, k\in I$ and $h\neq k$. 
\end{definition}

 As for processes, $ \coDefGr$ indicates that global types are
coinductively defined \emph{regular} terms.  The global type
$\agtO{\pp}{\q}i I{\la}{\G}$ specifies that $\pp$ sends a label
$\la_h$ with $h\in I$ to $\q$ and then the interaction described by
the global type $\G_h$ takes place. Dually, the global type $\agtI
\pp\q i I \la \G $ specifies that $\q$ receives label $\la$ from $\pp$ and
then the interaction described by the global type $\G$ takes place. The terminated global type is $\End$ and we
will omit trailing $\End$'s. 

Clearly  message outputs   must precede the corresponding inputs, since in the asynchronous communication the output puts the message on the queue and the input takes the message from the queue. Once  a message  
is on the queue no other  message  
can be read  with  
the same sender and receiver. This justifies the fact that inputs in global types have no choices. 

\begin{example}[A global type for the social media example] {\em 
A global type describing a possible behaviour of the network of Example \ref{smex}
is provided in Figure \ref{fig:gtsm}.}
\begin{figure}[h]
\ClineL{
\G =\pu_1\pu_2! \left\{\begin{array}{l}
                       \\[-9mm]
                       \msg{go}.\pu_2\pu_1! \left\{\begin{array}{l}
                                   \msg{go}.
                                           \pu_1\pu_2?\msg{go}.\pu_2\pu_1?\msg{go}.\pu_2\ps!\msg{req}
                                           .\ps\,\pu_2?\msg{req}.
                                           \pu_1\ps!\msg{req}.\ps\,\pu_1?\msg{req}.\\[-3mm]
                                                 \hspace{83mm}      \hookleftarrow\\[-3mm]
                                          \hspace{5.3mm} \ps\,\pu_1!\msg{data}.\pu_1\,\ps?\msg{data}.
                                          \ps\,\pu_2!\msg{data}. \pu_2\,\ps?\msg{data}. 
                                          \G
                                  \\
                                  \msg{stop}.\pu_1\pu_2?\msg{stop}.\pu_2\pu_1?\msg{go}.\pu_1\ps!\msg{req}
                                                                     \end{array}\right.  \\[7mm]
                             \vspace{-2mm}\msg{stop}.\pu_2\pu_1!\left\{\begin{array}{l}
                                                             \msg{go}.\pu_1\pu_2?\msg{go}.\pu_2\pu_1?\msg{stop}\\
                                                             \msg{stop}.\pu_1\pu_2?\msg{stop}.\pu_2\pu_1?\msg{stop}
                                                                          \end{array}\right.
                                   \end{array}\right.
}
\caption{A global type for the social media session.}\label{fig:gtsm} 
\end{figure}
\end{example}

The set of {\em players of a global type,} notation 
$\Plays{\G}$, is  
the smallest set satisfying the following equations: 

\ClineL{\begin{array}{c}
\Plays{\End} = \emptyset \\
\Plays{ \agtO{\pp}{\q}i I{\la}{\G}} = \set{\pp}\cup\bigcup_{i\in I}\Plays{\G_i} \qquad
 \Plays{\agtI \pp\q  i I\la {\G'}} = \set{\pp}\cup\Plays{\G'} 
\end{array}}

\noindent
Notice that the sets of players are always finite thanks to the regularity of global types. 

To guarantee good communication properties for typable sessions, we require global types to satisfy a boundedness condition. 
To formalise boundedness we use the notion of  \emph{path} of a global type.
{\em Paths} are actual paths in global types  viewed  as trees.
They are possibly infinite sequences of communications, and are ranged over by $\ipth$.
Note that a finite path is a trace in the sense of \refToDef{tra}. 
 We extend the notation $\cdot$ to denote 
also  the concatenation of a finite sequence with a possibly
infinite sequence.  
The function ${\sf Paths}$ returns  the  set of all the
\emph{paths} of a global type and is defined as  the greatest set such that:

\ClineL{\begin{array}{ll} 
\IPaths{\End} &= \set{\epsilon}  \\ 
\IPaths{\agtO{\pp}{\q}i I{\la}{\G}} &= \bigcup_{i\in I} \set{ \concat{\CommAs\pp{\la_i}\q}{\ipth} \mid \ipth\in\IPaths{\G_i} } \\
 \IPaths{\agtI \pp\q i I \la {\G'} } 
&= 
\set{ \concat{\CommAsI\pp{\la}\q}{\ipth} \mid \ipth\in \IPaths{\G'} } 
\end{array}} 

 If $x \in \mathbf{N} \cup \set{\infty}$ is the
length of $\ipth$, i.e. $x=\mid{\ipth}\mid$, we denote by $\ipth[n]$ the $n$-th communication in the path $\ipth$,
where $1\le n < x$ if $x=\infty$ and $1\le n \le x$ if $x\neq\infty$. 
It is handy to define the \emph{depth} of a player $\pp$ in a global type 
$\G$, $\weight(\G,\pp)$. 

\begin{definition}[Depth of a  Player]\label{def:depth}
Let $\G$ be a global type. 
For ${\ipth\in\IPaths{\G}}$ set

 \ClineL{\weight(\ipth,\pp) = \inf \{ n \mid \play{\ipth[n]} = \pp \}}
 
and define $\weight(\G,\pp)$, the \emph{depth} of $\pp$ in $\G$, as follows:

\ClineL{\weight(\G,\pp) = \begin{cases} 
\sup \{ \weight(\ipth,\pp) \mid \ipth \in \IPaths{\G} \}& \pp \in \Plays{\G} \\ 
0 & \text{otherwise} 
\end{cases}
}

\end{definition}
Note that $\weight(\G,\pp)=0$ iff  $\pp \not\in \Plays{\G}$. 
 Moreover, if  $\pp\ne\play{\ipth[n]}$  for 
all $n\in\mathbf{N}$,  then $\weight(\ipth,\pp) = \inf\, \emptyset = \infty$. 
Hence, if $\pp$ is a player of a global type $\G$ 
and there is some path in $\G$ where $\pp$ does not occur as a player, 
then $\weight(\G,\pp) = \infty$.

\begin{definition}[Boundedness]\label{def:bound}
A global type $\G$ is \emph{bounded} if  $\weight(\G',\pp)$ is finite
for each participant $\pp\in\Plays{\G}$ and  each type  
$\G'$ which occurs in   $\G$. 
\end{definition}
\begin{example}\label{exb} {\em The following example shows the  necessity  of considering all types  occurring in a global type for defining boundedness. 
Consider 
 $\G= \agtoneO \pr\q {\la}{\agtIS\q\pr  {\la}.\G'} $,  where
 
\ClineL{\G'=\agtSOS \pp\q {\{\Seq{\la_1}{\agtIS \q \pp {\la_1}.\agtoneO \q\pr {\la_3}\agtIS\pr\q {\la_3}}\, ,\,\Seq{\la_2}{\Seq{\agtIS \q\pp {\la_2}}{\G'}}\}}
} 

\noindent Then we have: $
 \weight(\G,\pp)=3,\weight(\G,\q)=2 ,\weight(\G,\pr)=1
$, 
whereas 
$\weight(\G',\pp)=1,\weight(\G',\q)=2,\weight(\G',\pr)=\infty
$.}
\end{example}
Since global types are regular, the boundedness condition is decidable. 

The following notion of {\em weight} will be  used  
for defining the subsequent notion of
$\pas$-soundness, a condition in the typing rules, needed to guarantee $\pas$-orphan-message-freedom.  The weight says  if and  where the global type prescribes an input corresponding to a message. Clearly if the message is $\mq\pp\la\q$ and the global type is $\q\pp?\la'.\G$ with $\la\neq\la'$, then the global type forbids to read this message. 
 
\begin{definition}[Weight]\label{wm}$\;$

\ClineL{\wgs{\mq\pp\la\q}\G = \begin{cases}
0      & \text{if $\G = \q\pp?\la.\G'$} \\ 
\infty & \text{if $\G = \End$ or $\G=\q\pp?\la'.\G'$ with $\la\neq\la'$} \\ 
1 + \max_{i\in I} \wgs{\mq\pp\la\q}{\G_i} & \text{if $\G = \agtO{\pr}{\ps}{i}{I}{\la}{\G}$}\\
1 + \wgs{\mq\pp\la\q}{\G'}&\text{if $\G = \pr\ps?\la'.\G'$ and $\pr\neq\pp$ or $\ps\neq\q$}
\end{cases}}
\end{definition}

 We consider the parallel composition of a global type with a queue that we dub {\em type configuration}. The $\pas$-soundness of type configurations ensures that all messages with both participants in $\pas$ have corresponding inputs in all the paths of the global type. 

\begin{definition}[$\pas$-soundness]\label{ps}
A type configuration $\G\parG\Msg$ is $\pas$-{\em sound} if  $\wgs{\mq\pp\la\q}\G$ is finite for all messages $\mq\pp\la\q$ which occur in $\Msg$ with $\set{\pp,\q}\subseteq\pas$. \end{definition}

\subsection{Partial Type System}
 
 As mentioned before, we devise a type system ensuring partial communication properties
 for typable sessions.
 Being in an asynchronous setting, some restrictions have to be imposed in order to guarantee
 decidability of typability. We achieve that by looking at  queues as invariants for cycles.
 This is a quite more flexible condition than, for instance, imposing a fixed bound on the number of messages between participants. It would be rather cumbersome to guarantee our condition
 in a coinductive type system which, like those in~\cite{CDG21,CDG22,DGD22,BD23}, suits a formalism with coinductively defined processes and types.  
 We hence introduce an implicitly coinductive type system, that is looking like the 
 inductive versions of coinductive systems, as defined in~\cite[Section  21.9]{pier02}.
 We define an inductive system with {\em histories} (see below), where the queue invariance
 can be immediately guarantee by the typing rule for cycles.

 \begin{definition}[Histories]
  A {\em history} $\Hset$ is a finite  set  of (session, global type) pairs, namely
 
\ClineL{\Hset::=\emptyset\mid\Hset,\pair{\Nt\parN\Msg}\G}

\end{definition}

We define 
$\pair{\Nt\parN-}\G\in\Hset$ if $\pair{\Nt\parN\Msg}\G\in\Hset$  
for some $\Msg$.

\begin{definition}[Partial Type System]
The judgements of our partial type system have the form

\ClineL{
\Hset\tyng{\pas}{\G}{\Nt\parN\Msg}
}

\noindent
where $\pas$ is a set of participants (those whose properties we are interested in)
and where the global type $\G$ is bounded. 
The inference rules are described in Figure \ref{fig:cntr}. 
\end{definition}

\begin{figure}[h]
\begin{math}
\begin{array}{c}
\NamedRule{\rn{End}}{\text{ $\Plays{\Nt}\cap\pas=\emptyset$}}{ \Hset\tyng\pas\End{\Nt\parN\Msg} }{\End\parG\Msg~\mathrm{is}~\pas\text{-}\mathrm{sound}} \qquad\qquad
\NamedRule{\rn{Cycle}}{\text{ $\pair{\Nt\parN\Msg}\G\in\Hset$}}{ \Hset
\tyng\pas\G{\Nt\parN\Msg} }{} \\[6ex] 
\NamedRule{\rn{Out}}{
\Hset,\pair{\pP\pp{\PP}\parN\Nt\parN\Msg}\G  \tyng\pas{\G_i}{\pP\pp{\PP_i}\parN\Nt\parN\Msg\cdot\mq\pp{\la_i}{\q}}\\ \text{ $(\Plays{\Nt}\setminus\Plays{\G})\cap\pas=\emptyset\quad \forall  i  \in I $}
}{ \Hset\tyng\pas{\G}{\pP\pp{\PP}\parN\Nt\parN\Msg}}
{\begin{array}{c}\G\parG\Msg~\mathrm{is}~\pas\text{-}\mathrm{sound}\\
\pair{\pP\pp{\PP}\parN\Nt\parN-}\G\not\in\Hset\\
\G=\agtO{\pp}{\q}{i}{I}{\la}{\G}\quad
\PP=\oup{\q}{i}{I}{\la}{\PP}
\end{array}}
\\[6ex] 
\NamedRule{\rn{In}}{
\Hset,\pair{\pP\pp{\PP}\parN\Nt\parN\Msg}\G  \tyng\pas{\G'}{\pP\pp{\PP_{h}}\parN\Nt\parN\Msg}\\
  \text{ $(\Plays{\Nt}\setminus\Plays{\G})\cap\pas=\emptyset\quad h  \in I$}
}{ \Hset\tyng\pas{\G}{\pP\pp{\PP}\parN\Nt\parN\mq\q{\la_h}\pp\cdot\Msg} }
{\begin{array}{c}\G'\parG\Msg~\mathrm{is}~\pas\text{-}\mathrm{sound}\\
\pair{\pP\pp{\PP}\parN\Nt\parN-}\G\not\in\Hset\\\G=\agtI{\pp}{\q}i I{\la_h}{\G'}\quad\PP=\inp{\q}{i}{I}{\la}{\PP}\end{array}}
\end{array} 
\end{math}
\caption{Typing rules for sessions.}\label{fig:cntr} 
\end{figure}
\begin{figure}[h]\centerline{\small 
\prooftree
      \prooftree
            \prooftree
                 \prooftree
                      \prooftree
                           \prooftree
                               \prooftree
                                    \prooftree
                                        \prooftree
                                            \prooftree
                                                \prooftree
                                                    \prooftree
                                                       \prooftree
                                                       \justifies
                                                            \Hset_{15} \vdash_\pas
                                                    \pP{\pu_1}{\PU_1} \parN \pP{\pu_2}{\PU_2} \parN \pP{\ps}{\PS} \parN \emptyset   :\G 
                                                       \using{\rn{\small cycle}}
                                                       \endprooftree
                                                    \justifies
                                                        \Hset_{14} \vdash_\pas
                                                 \pP{\pu_1}{\PU_1} \parN \pP{\pu_2}{\PU^{\text{\tiny IV}}_2 } \parN \pP{\ps}{\PS} \parN     \Msg_6  :\G_{15}
                                                    \endprooftree
                                                \justifies
                                                     \Hset_{13} \vdash_\pas
                                                 \pP{\pu_1}{ \PU_1  } \parN \pP{\pu_2}{\PU^{\text{\tiny IV}}_2 } \parN \pP{\ps}{ \PS^{\text{\tiny III}}} \parN    \emptyset  
                                                 :\G_{14}
                                                \endprooftree
                                            \justifies
                                                \Hset_{12} \vdash_\pas
                                                 \pP{\pu_1}{\PU^{\text{\tiny IV}}_1} \parN \pP{\pu_2}{\PU^{\text{\tiny IV}}_2 } \parN \pP{\ps}{\PS^{\text{\tiny III}}} \parN    \Msg_5 :\G_{13}
                                            \endprooftree
                                        \justifies
                                             \Hset_{11} \vdash_\pas
                                                 \pP{\pu_1}{\PU^{\text{\tiny IV}}_1} \parN \pP{\pu_2}{\PU^{\text{\tiny IV}}_2 } \parN \pP{\ps}{\PS^{\text{\tiny II}}} \parN   \emptyset :\G_{12}
                                        \endprooftree
                                    \justifies
                                       \Hset_{10} \vdash_\pas
                                                \pP{\pu_1}{\PU^{\text{\tiny IV}}_1} \parN \pP{\pu_2}{\PU^{\text{\tiny IV}}_2 } \parN \pP{\ps}{\PS^{\text{\tiny I}}} \parN    \Msg_4 :\G_{11}
                                    \endprooftree
                               \justifies
                                  \Hset_9 \vdash_\pas
                                                  \pP{\pu_1}{ \PU^{\text{\tiny III}}_1} \parN \pP{\pu_2}{\PU^{\text{\tiny IV}}_2 } \parN \pP{\ps}{\PS^{\text{\tiny I}}} \parN  \emptyset :\G_{10}
                               \endprooftree
                           \justifies
                               \Hset_7 \vdash_\pas
                               \pP{\pu_1}{ \PU^{\text{\tiny III}}_1} \parN \pP{\pu_2}{ \PU^{\text{\tiny IV}}_2 } \parN \pP{\ps}{\PS} \parN  \Msg_3 :\G_9
                           \endprooftree
                      \justifies
                      \Hset_5 \vdash_\pas
                                \pP{\pu_1}{\PU^{\text{\tiny III}}_1} \parN \pP{\pu_2}{\PU^{\text{\tiny III}}_2} \parN \pP{\ps}{\PS} \parN \emptyset  :\G_7
                      \endprooftree
                 \justifies
                    \Hset_3 \vdash_\pas
                                  \pP{\pu_1}{\PU^{\text{\tiny III}}_1} \parN \pP{\pu_2}{\PU^{\text{\tiny I}}_2} \parN \pP{\ps}{\PS} \parN \Msg_0     : \G_5
                 \endprooftree
            \justifies
              \Hset_2 \vdash_\pas
                             \pP{\pu_1}{\PU^{\text{\tiny I}}_1} \parN \pP{\pu_2}{\PU^{\text{\tiny I}}_2} \parN \pP{\ps}{\PS} \parN \Msg_1 : \G_3
            \endprooftree
            \hspace{-4mm}
            \prooftree
                 \prooftree
                     \prooftree
                          \prooftree
                          \justifies
                                \Hset_{8} \vdash_\pas
                               \pP{\pu_1}{\inact} \parN \pP{\pu_2}{\inact} \parN \pP{\ps}{\PS} \parN \Msg_7 :\End
                         \using \rn{\small end}
                          \endprooftree
                     \justifies
                            \Hset_6 \vdash_\pas
                                 \pP{\pu_1}{\PU^{\text{\tiny V}}_1} \parN \pP{\pu_2}{\inact} \parN \pP{\ps}{\PS} \parN \emptyset :\G_8                    
                     \endprooftree
                 \justifies
                    \Hset_4 \vdash_\pas
                            \pP{\pu_1}{\PU^{\text{\tiny V}}_1} \parN \pP{\pu_2}{\PU^{\text{\tiny II}}_2} \parN \pP{\ps}{\PS} \parN \Msg_0    :\G_6
                 \endprooftree
            \justifies
              \Hset_2 \vdash_\pas
                             \pP{\pu_1}{\PU^{\text{\tiny I}}_1} \parN \pP{\pu_2}{\PU^{\text{\tiny II}}_2} \parN \pP{\ps}{\PS} \parN \Msg_2 :\G_4
            \endprooftree
      \justifies
      \Hset_1 \vdash_\pas 
                    \pP{\pu_1}{\PU^{\text{\tiny I}}_1} \parN \pP{\pu_2}{\PU_2} \parN \pP{\ps}{\PS} \parN \Msg_0 : \G_1
      \endprooftree
      \text{\Large $\mathcal{D}$}\quad
\justifies
\vdash_\pas 
        \pP{\pu_1}{\PU_1} \parN \pP{\pu_2}{\PU_2} \parN \pP{\ps}{\PS} \parN \emptyset : \G
\endprooftree
}
\caption{Derivation for the social media example.}\label{der}
\end{figure}

In case all the participants in $\pas$ (those we care about)  terminate, we are not interested
 anymore in  what  other participants do and hence we do not record their behaviours in the
 global type. This is essentially what is formalised by Axiom $\rn{End}$.  
 No message with both sender and receiver in $\pas$ must be present in the queue
 if we wish to  ensure $\pas$-orphan-message-freedom.
 This is formalised by the clause ``$\End\parG\Msg~\mathit{is}~\pas\text{-}\mathit{sound}$''
of   Axiom   \rn{End}. 
 
 The inductive rules of our system can be looked at as a type reconstruction algorithm for
 a coinductively defined system. 
 
 We formalise in  
   Axiom   $\rn{Cycle}$ also an invariant requirement for  ensuring 
  decidability,
 namely  the  invariance of queues for cycles. This implies that any output in a cycle must have
 a corresponding input in the cycle itself. 
 
 Rules $\rn{Out}$ and $\rn{In}$ enable to record in the global types the actions performed
 by processes.\\ 
  Rule $\rn{Out}$ adds in the process and in the global type the same outputs. Rule $\rn{In}$ adds one input in the global type and it allows more inputs in the process, mimicking the subtyping for session types~\cite{DH12}.\\ Both rules require as premises the typability of the sessions obtained by reducing the added communications. These rules ask for some conditions. 
 The condition $(\Plays{\Nt}\setminus\Plays{\G})\cap\pas=\emptyset$ ensures that the communications done by players in $\Nt$ which belong to $\pas$ are recorded in $\G$. 
  The $\pas$-soundness condition for configurations is needed to  ensure 
 absence of orphan-messages with sender and receiver in $\pas$.
 The condition $\pair{\pP\pp{\PP}\parN\Nt\parN-}\G\not\in\Hset$, together with the one for
 Axiom $\rn{Cycle}$, is used for ensuring decidability. 
 Our type system is in fact decidable, since global types and processes are regular. 
In particular, any bottom-up attempt  to reconstruct  
a branch of a to-be derivation necessarily 
ends up with an application of Axiom $\rn{End}$, or of Axiom $\rn{Cycle}$ or fails
because Rules $\rn{Out}$ and $\rn{In}$ do not apply.

Whereas our type system enables to deal with participants whose lock-freedom
we do care about, the system of \cite{BD23}, besides taking into account a synchronous
model of communication, deals with participants whose lock-freedom
we do not care about. 
Even if equivalent from an abstract viewpoint, these two different perspectives from which
one can deal with the notion of  ``partiality'' ,
bring with them pros and cons when formalised in specific MPST type systems.
For instance, something like the rule  $\rn{Weak}$ of  \cite{BD23}
is not needed here, so accounting for simpler proofs. 
On the other hand, the loose treatment of disregarded participants in \cite{BD23},
where one can consider different sets of participants 
in different branches of derivations, allows  for a modular development of the derivations. 

The presence of queues in our asynchronous setting 
makes some extra conditions -- besides the regularity of global types and processes --
necessary in order to get a decidable type systems.
Such extra conditions are definitely easier to formalise in an inductive system rather that in
a coinductive one, so accounting for the use of an inductive system, unlike a coinductive one as in \cite{BD23}.

\begin{example}[Typing for the social media example]\em
\begin{figure}
 \centerline{$
\begin{array}{l}
\Hset_1 = (\pP{\pu_1}{\PU_1} \parN \pP{\pu_2}{\PU_2} \parN \pP{\ps}{\PS} \parN \emptyset, \G)\\[5pt]
\Hset_2 = \Hset_1,(\pP{\pu_1}{\PU^{\text{\tiny I}}_1} \parN \pP{\pu_2}{\PU_2} \parN \pP{\ps}{\PS} \parN \Msg_0, \G_1)\\[5pt]
\Hset_3 = \Hset_2, (\pP{\pu_1}{\PU^{\text{\tiny I}}_1} \parN \pP{\pu_2}{\PU^{\text{\tiny I}}_2} \parN \pP{\ps}{\PS} \parN \Msg_1, \G_3)\\[5pt]
\Hset_4 =  \Hset_2, (\pP{\pu_1}{\PU^{\text{\tiny I}}_1} \parN \pP{\pu_2}{\PU^{\text{\tiny II}}_2} \parN \pP{\ps}{\PS} \parN \Msg_2,\G_4)\\[5pt]
\Hset_5 =  \Hset_3, (\pP{\pu_1}{\PU^{\text{\tiny III}}_1} \parN \pP{\pu_2}{\PU^{\text{\tiny I}}_2} \parN \pP{\ps}{\PS} \parN \Msg_0, \G_5)\\[5pt]
\Hset_6 =  \Hset_4, (\pP{\pu_1}{\PU^{\text{\tiny V}}_1} \parN \pP{\pu_2}{\PU^{\text{\tiny II}}_2} \parN \pP{\ps}{\PS} \parN \Msg_0,\G_6)\\[5pt]
\Hset_7 = \Hset_5, (\pP{\pu_1}{\PU^{\text{\tiny III}}_1} \parN \pP{\pu_2}{\PU^{\text{\tiny III}}_2} \parN \pP{\ps}{\PS} \parN \emptyset, \G_7)\\[5pt]
\Hset_8 =  \Hset_6, (\pP{\pu_1}{\PU^{\text{\tiny V}}_1} \parN \pP{\pu_2}{\inact} \parN \pP{\ps}{\PS} \parN \emptyset,\G_8)\\[5pt]
\Hset_9 = \Hset_7, (\pP{\pu_1}{ \PU^{\text{\tiny III}}_1 } \parN \pP{\pu_2}{ \PU^{\text{\tiny IV}}_2  } \parN \pP{\ps}{\PS} \parN  \Msg_3,\G_9)\\[5pt]
\Hset_{10} = \Hset_9, (\pP{\pu_1}{ \PU^{\text{\tiny III}}_1 } \parN \pP{\pu_2}{ \PU^{\text{\tiny IV}}_2 } \parN \pP{\ps}{\PS^{\text{\tiny I}}} \parN  \emptyset, \G_{10})\\[5pt] 
\Hset_{11} = \Hset_{10}, (\pP{\pu_1}{\PU^{\text{\tiny IV}}_1} \parN \pP{\pu_2}{\PU^{\text{\tiny IV}}_2 } \parN \pP{\ps}{\PS^{\text{\tiny I}}} \parN    \Msg_4,\G_{11})\\[5pt]
\Hset_{12} = \Hset_{11}, ( \pP{\pu_1}{\PU^{\text{\tiny IV}}_1} \parN \pP{\pu_2}{\PU^{\text{\tiny IV}}_2 } \parN \pP{\ps}{\PS^{\text{\tiny II}}} \parN   \emptyset, \G_{12})\\[5pt]
\Hset_{13} = \Hset_{12}, (\pP{\pu_1}{\PU^{\text{\tiny IV}}_1} \parN \pP{\pu_2}{\PU^{\text{\tiny IV}}_2 } \parN \pP{\ps}{\PS^{\text{\tiny III}}} \parN    \Msg_5,\G_{13})\\[5pt]
\Hset_{14} = \Hset_{13}, (\pP{\pu_1}{ \PU_1 } \parN \pP{\pu_2}{\PU^{\text{\tiny IV}}_2 } \parN \pP{\ps}{ \PS^{\text{\tiny III}}} \parN    \emptyset ,\G_{14})\\[5pt]
\Hset_{15} = \Hset_{14}, (\pP{\pu_1}{\PU_1} \parN \pP{\pu_2}{\PU^{\text{\tiny IV}}_2 } \parN \pP{\ps}{\PS} \parN     \Msg_6  :\G_{15})
\end{array}
$} 
\caption{Histories for the derivation of the social media example.}\label{h}
\end{figure}

\begin{figure}
\centerline{$
\begin{array}{l}
\PU^{\text{\tiny I}}_1 = \pu_2?\left\{\begin{array}{l}
                              \msg{go}.\PU^{\text{\tiny III}}_1\\
                              \msg{stop}.\PU^{\text{\tiny V}}_1
                                      \end{array}
                            \right.\quad
 \PU^{\text{\tiny II}}_1 = \pu_2?\left\{\begin{array}{l}
                              \msg{go}\\
                              \msg{stop}
                                      \end{array}
                            \right.\quad
\PU^{\text{\tiny III}}_1 = \ps!\msg{req}.\PU^{\text{\tiny IV}}_1\quad
\PU^{\text{\tiny IV}}_1 = \ps?\msg{data}.\PU_1\quad
\PU^{\text{\tiny V}}_1 =  \ps!\msg{req}\\[20pt]                           
\PU^{\text{\tiny I}}_2 = \pu_1?\left\{\begin{array}{l}
                              \msg{go}.\PU^{\text{\tiny III}}_2\\
                              \msg{stop}
                                      \end{array}
                            \right.\qquad
 \PU^{\text{\tiny II}}_2 = \pu_1?\left\{\begin{array}{l}
                              \msg{go}\\
                              \msg{stop}
                                      \end{array}
                            \right.\qquad
\PU^{\text{\tiny III}}_2 = \ps!\msg{req}.\PU^{\text{\tiny IV}}_2\qquad
\PU^{\text{\tiny IV}}_2 = \ps?\msg{data}.\PU_2\\[20pt]                                                
\PS^{\text{\tiny I}} = \pu_1?\msg{req}.\PS^{\text{\tiny II}}\qquad
\PS^{\text{\tiny II}} = \pu_1!\msg{data}.\PS^{\text{\tiny III}}\qquad
\PS^{\text{\tiny III}} =  \pu_2!\msg{data}.\PS 
\end{array}
$}

\caption{Processes for the derivation of the social media example.}\label{g}
\end{figure}

\begin{figure}
\centerline{$
\begin{array}{l}
\Msg_0=\mq{\pu_1}{\msg{go}}{\pu_2}\qquad
\Msg_1=\Msg_0\cdot \mq{\pu_2}{\msg{go}}{\pu_1}\qquad
\Msg_2=\Msg_0\cdot \mq{\pu_2}{\msg{stop}}{\pu_1}\qquad
\Msg_3=\mq{\pu_2}{\msg{req}}{\ps}\\[5pt]
\Msg_4=\mq{\pu_1}{\msg{req}}{\ps}\qquad
\Msg_5=\mq{\ps}{\msg{data}}{ \pu_1  }\qquad
\Msg_6= 
 \mq{\ps}{\msg{data}}{\pu_2}\qquad
\Msg_7=\mq{\pu_1}{\msg{req}}{\ps}
\end{array}
$}


\caption{Queues for the derivation of the social media example.}\label{q}
\end{figure}

\begin{figure}[t]

\centerline{$
\begin{array}{l}
\G_1 = \pu_2\pu_1! \left\{\begin{array}{l}
                                   \msg{go}.\G_{3}
                                           \\
                                  \msg{stop}.\G_4
                                                                     \end{array}\right.  \qquad
\G_2 = \pu_2\pu_1!\left\{\begin{array}{l}
                                                             \msg{go}.\pu_1\pu_2?\msg{go}.\pu_2\pu_1?\msg{stop}\\
                                                             \msg{stop}.\pu_1\pu_2?\msg{stop}.\pu_2\pu_1?\msg{stop}
                                                                          \end{array}\right.\qquad                                                                     
\G_{3} = \pu_1\pu_2?\msg{go}.\G_{5}\\[20pt]

\G_{4} = \pu_1\pu_2?\msg{stop}.\G_{6}\qquad

\G_{5} = \pu_2\pu_1?\msg{go}.\G_{7}\qquad

\G_{6} = \pu_2\pu_1?\msg{go}.\G_{8}\qquad

 \G_{7} = \pu_2\ps!\msg{req}.\G_{9}\\[20pt]

\G_{8} = \pu_1\ps!\msg{req}\qquad

 \G_{9} =  \ps\,\pu_2?\msg{req}.\G_{10}\qquad
\G_{10} =  \pu_1\ps!\msg{req}.\G_{11}\qquad

\G_{11} = \ps\,\pu_1?\msg{req}.\G_{12}\\[20pt]

 \G_{12}  =  \ps\,\pu_1!\msg{data}. \G_{13} \qquad
 \G_{13}  =   \pu_1\,\ps?\msg{data}. \G_{14}\qquad

 \G_{14}  = \ps\,\pu_2!\msg{data}. \G_{15}\qquad

\G_{15} =  \pu_2\,\ps?\msg{data}. \G\end{array}
$}

\caption{Global types for the derivation of the social media example.}\label{fp}
\end{figure}

\noindent
The type derivation for our social media example is shown in Figure~\ref{der} 
where $\pas={\Set{\pu_1,\pu_2}}$ and  $\mathcal{D}$ is the derivation with conclusion

\ClineL{\Hset_1 \vdash_{\Set{\pu_1,\pu_2}} \pP{\pu_1}{\PU^{\text{\tiny II}}_1} \parN  \pP{\pu_2}{\PU_2} \parN \pP{\ps}{\PS} \parN \mq{\pu_1}{\msg{stop}}{\pu_2} : \G_2}
\noindent
whose detailed description we omit for the sake of readability.  
The abbreviations used in Figure~\ref{der} are listed in Figures \ref{h}, \ref{g}, \ref{q}, 
\ref{fp}. 
\end{example}

\begin{figure}
\begin{math} 
\begin{array}{c}
\NamedRule{{\rn{Top-Out}}}
 { } 
 {\agtO{\pp}{\q}i I{\la}{\G}\parG\Msg \stackred{\CommAs\pp{\la_h}{\q}}\G_h\parG\addMsg\Msg{\mq\pp{\la_h}{\q}}}
 {h\in I}
\\[4ex]
\NamedRule{{\rn{Top-In}}}
 { } 
 {\agtI \pp\q i I\la \G\parG\addMsg{\mq{\q}{\la}\pp}\Msg \stackred{\CommAsI{\pp}{\la}\q}\G\parG\Msg}
 {}
\\[4ex]
\NamedRule{{\rn{Inside-Out}}}
 {\G_i\parG \Msg\cdot\mq\pp{\la_i}{\q} \stackred\asCom\G'_i \parG \Msg'\cdot\mq\pp{\la_i}{\q} \quad \forall i \in I} 
 {\agtO{\pp}{\q}i I{\la}{\G}\parG\Msg \stackred \asCom\agtO{\pp}{\q}i I{\la}{\G'}\parG\Msg' }  
 {\pp\ne\play{\asCom} }
\\[4ex]
\NamedRule{{\rn{Inside-In}}}
 {\G\parG \Msg\stackred\asCom\G'\parG \Msg'}
 {\agtI \pp\q i I \la \G\parG \mq\q{\la}\pp\cdot\Msg \stackred\asCom
 \agtI \pp\q i {I} \la {\G'}\parG  {\mq\q{\la}\pp\cdot\Msg'} 
}  
 {
 \pp\ne\play{\asCom}  
 }
\end{array}
\end{math} 
\caption{LTS for type configurations. }\label{fig:ltsgtAs}
\end{figure}

In order to show that  a type configuration  
does represent a correct and complete
description of the overall behaviour of a session (see Subject Reduction and Session Fidelity theorems), we equip  type  configurations with an LTS, as formally defined in Figure \ref{fig:ltsgtAs}.
Actually we are interested in reducing only type configurations $\G\parG \Msg$ such that $\tyng\pas{\G}{\Nt\parN\Msg}$ for some $\pas$ and $\Nt$. This justifies the shapes of message queues in Rules $\rn{Inside-Out}$ and $\rn{Inside-In}$,  which mimic the message queues in Rules $\rn{Out}$ and $\rn{In}$, see Figure~\ref{fig:cntr}.  The condition $\pp\ne\play{\asCom}$ in these rules ensures that  $\asCom$ is independent of the enclosing communication.

\FloatBarrier

 \section{Properties of Typable Sessions}\label{pr}

We begin with a few technical lemmas enabling to prove Subject reduction and Session Fidelity,
that is completeness and correctness, respectively, of type configurations with respect to sessions
(by taking into account participants in $\pas$ only).
These in turn will enable us to prove partial communication properties for typable sessions.

A first lemma immediately follows by cases on the typing axioms/rules.

\begin{lemma}\label{aa} 
If $\tyng\pas{\G}{\Nt\parG\Msg}$, then 
$(\Plays{\Nt}\setminus\Plays{\G})\cap\pas=\emptyset$ and $\G\parG\Msg$ is $\pas$-sound.
\end{lemma}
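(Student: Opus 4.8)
The plan is to prove Lemma~\ref{aa} by induction on the derivation of $\tyng\pas{\G}{\Nt\parG\Msg}$, proceeding by a case analysis on the last rule applied. Since the judgement has empty history at the root, but the inductive hypothesis will be applied to sub-derivations with non-empty histories, I would actually state and prove the slightly more general claim that for \emph{every} judgement $\Hset\tyng\pas{\G}{\Nt\parG\Msg}$ occurring in a derivation, the two conclusions hold; the history plays no role in the two properties being established, so this strengthening is harmless.

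First I would handle the two axioms. For \rn{End}, we have $\G=\End$, so $\Plays{\G}=\emptyset$, and the side condition $\Plays{\Nt}\cap\pas=\emptyset$ gives exactly $(\Plays{\Nt}\setminus\Plays{\End})\cap\pas = \Plays{\Nt}\cap\pas = \emptyset$; the $\pas$-soundness of $\End\parG\Msg$ is literally the other side condition of the axiom. For \rn{Cycle}, the conclusion $\Hset\tyng\pas{\G}{\Nt\parG\Msg}$ requires $\pair{\Nt\parG\Msg}\G\in\Hset$; here I would need to observe that whenever a pair $\pair{\Nt\parG\Msg}\G$ was placed into a history, it was placed there by an application of \rn{Out} or \rn{In} whose \emph{conclusion} was $\Hset'\tyng\pas\G{\Nt\parG\Msg}$, and for that conclusion the two desired properties were already verified as side conditions (the $(\Plays{\Nt}\setminus\Plays{\G})\cap\pas=\emptyset$ clause and the $\pas$-soundness clause appear directly among the premises/side-conditions of \rn{Out} and \rn{In} applied to $\Nt$ and $\G$ themselves, not merely to the reduced session). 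So the properties propagate to the \rn{Cycle} leaf.

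For the inductive cases \rn{Out} and \rn{In}, the work is essentially already done by the side conditions: in \rn{Out}, with $\Nt = \pP\pp\PP\parN\Nt''$ (writing $\Nt''$ for the rest) and $\G = \agtO\pp\q i I\la\G$, the side condition $(\Plays{\Nt''}\setminus\Plays{\G})\cap\pas=\emptyset$ together with $\pp\in\Plays{\G}$ and $\plays\PP$ being covered by $\Plays{\G}$ (since $\PP=\oup\q i I\la\PP$ has $\pp,\q$ and the continuations, whose further players show up deeper in $\G$ or in $\Nt''$) — more carefully, I would argue $\Plays{\Nt}\setminus\Plays{\G} \subseteq \Plays{\Nt''}\setminus\Plays{\G}$ modulo the head participant $\pp$ which lies in $\Plays{\G}$, and invoke the premise's side condition; the $\pas$-soundness of $\G\parG\Msg$ is an explicit side condition of the rule. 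The case \rn{In} is analogous, with $\G=\agtI\pp\q i I{\la_h}{\G'}$ and queue $\mq\q{\la_h}\pp\cdot\Msg$, again reading off the two properties from the rule's side conditions. I expect the main obstacle — really the only subtle point — to be the \rn{Cycle} case, where the property does \emph{not} follow from a premise (there is none) but must be traced back to the point in the derivation where the history entry was created; making this precise may require either the generalized induction over all judgements in the derivation together with a small invariant on histories ("every entry $\pair{\Nt\parG\Msg}\G$ of every history satisfies the two properties"), or noting that the side conditions of \rn{Out}/\rn{In} were phrased precisely so that this holds. Everything else is a direct unfolding of definitions.
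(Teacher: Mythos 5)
Your proposal is correct, and its core --- reading the two properties directly off the side conditions of the last applied axiom/rule --- is exactly what the paper intends by ``immediately follows by cases on the typing axioms/rules''. Where you diverge is in the machinery surrounding \rn{Cycle}: since the lemma concerns a judgement with \emph{empty} history, \rn{Cycle} can never be the last rule of the derivation (its conclusion requires $\pair{\Nt\parN\Msg}\G\in\Hset$), so neither the strengthening to arbitrary histories, nor the induction, nor the invariant ``every history entry satisfies the two properties'' is actually needed; a single case analysis on the last rule suffices, with \rn{Cycle} dismissed as impossible. Your generalized statement is nevertheless true and your argument for it sound --- every pair placed into a history is precisely the (session, global type) of the \rn{Out}/\rn{In} conclusion that created it, whose side conditions are the two properties --- so the extra work costs only length, not correctness; indeed, as you note yourself, the ``induction hypothesis'' is never invoked. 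Two details worth tightening. First, in the \rn{Out}/\rn{In} cases the relevant set is $\Plays{\pP\pp\PP\parN\Nt}=\set\pp\cup\Plays\Nt$ (players of the network, i.e.\ participants with active processes), so the digression about $\plays\PP$ being ``covered'' by $\Plays\G$ is beside the point: all that is needed beyond the rule's own side condition on $\Nt$ is $\pp\in\Plays\G$, which holds because $\G$ starts with a communication whose player is $\pp$. Second, in the \rn{In} case the side condition asserts $\pas$-soundness of $\G'\parG\Msg$, whereas the conclusion concerns $\pp\q?\la_h.\G'\parG(\mq\q{\la_h}\pp\cdot\Msg)$; so ``reading off'' is not quite literal --- one must still unfold \refToDef{wm} to check that the weight of $\mq\q{\la_h}\pp$ is $0$ and that the prepended input increments the other relevant weights by one, keeping them finite. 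The paper leaves this step implicit as well, but since you single out $\pas$-soundness as a side condition to be quoted verbatim, it deserves a line.
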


The following lemma  allows to  get rid of histories  in particular derivations.  
It  states that,  if a judgement occurs in a proof whose conclusion is without history, then the judgement itself holds without history. Moreover,  if the premises of Rules $\rn{Out}$ and $\rn{In}$  hold without histories, also the conclusion holds without history.

\begin{lemma}\label{key} $\;$
\begin{enumerate}
\item\label{key2} 
If $\Hset\tyng\pas{\G}{\Nt\parN\Msg}$ occurs in the proof of  $\tyng\pas{\G'}{\Nt'\parN\Msg'}$, then 
$\tyng\pas{\G}{\Nt\parN\Msg}$.
\item\label{key3} 
If $\tyng\pas{\G_i}{\pP\pp{\PP_i}\parN\Nt\parN\Msg\cdot\mq\pp{\la_i}{\q}}$ for all $i\in I$,  then $\tyng\pas{\agtO{\pp}{\q}{i}{I}{\la}{\G}}{\pP\pp{\oup{\q}{i}{I}{\la}{\PP}}\parN\Nt\parN\Msg}$.
\item\label{key4} 
If $\tyng\pas{\G}{\pP\pp{\PP_h}\parN\Nt\parN\Msg}$ and $h\in I$,  then $\tyng\pas{\agtI{\pp}{\q}{i}{I}{\la_h}{\G}}{\pP\pp{\inp{\q}{i}{I}{\la}{\PP}}\parN\Nt\parN\mq\q{\la_h}{\pp}\cdot\Msg}$.
\end{enumerate}
\end{lemma}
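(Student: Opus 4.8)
\textbf{Proof plan for Lemma~\ref{key}.}
The plan is to prove the three parts essentially independently, with part~\ref{key2} by induction on the structure of the derivation of $\tyng\pas{\G'}{\Nt'\parN\Msg'}$, and parts~\ref{key3} and~\ref{key4} by direct application of the typing rules followed by an appeal to part~\ref{key2}.

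For part~\ref{key2}, I would induct on the (inductive) derivation $\mathcal D$ of $\tyng\pas{\G'}{\Nt'\parN\Msg'}$, which has empty history. If $\mathcal D$ is an instance of Axiom $\rn{End}$, then its only judgement is the conclusion itself (the history in the conclusion is $\emptyset$), so the claim is immediate. Axiom $\rn{Cycle}$ cannot be the last rule of $\mathcal D$, since its side condition $\pair{\Nt'\parN\Msg'}{\G'}\in\emptyset$ is never satisfiable with empty history; so this case is vacuous. If $\mathcal D$ ends with Rule $\rn{Out}$ or Rule $\rn{In}$, then its immediate subderivations have conclusions of the form $\emptyset,\pair{\Nt'\parN\Msg'}{\G'}\tyng\pas{\cdots}{\cdots}$, i.e.\ with the \emph{singleton} history $\Hset_0:=\pair{\Nt'\parN\Msg'}{\G'}$ rather than empty history, so the induction hypothesis does not apply verbatim. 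The fix is to strengthen the statement: I would instead prove, by induction on the derivation, that if $\Hset\tyng\pas{\G}{\Nt\parN\Msg}$ occurs in a derivation all of whose applications of Axiom $\rn{Cycle}$ refer to pairs that are \emph{not} present when that occurrence is reached --- more precisely, if the judgement $\Hset_0\vdash_\pas\cdots$ at the root has the property that no branch below the occurrence of $\Hset\tyng\pas{\G}{\Nt\parN\Msg}$ closes by $\rn{Cycle}$ \emph{on a pair of $\Hset_0$} --- then $\tyng\pas{\G}{\Nt\parN\Msg}$. When $\Hset_0=\emptyset$ the hypothesis is automatically met, giving the lemma. The key observation making the strengthened induction go through is that in Rules $\rn{Out}$ and $\rn{In}$ the side condition $\pair{\pP\pp{\PP}\parN\Nt\parN-}\G\not\in\Hset$ guarantees that the pair being pushed onto the history is genuinely new, so a later $\rn{Cycle}$ that points back to it is never a $\rn{Cycle}$ on a pair of the \emph{original} empty history $\Hset_0=\emptyset$; hence every subtree rooted at a premise of $\rn{Out}$/$\rn{In}$ is, after we delete from it every $\rn{Cycle}$-leaf pointing at the freshly added pair --- which can be legitimately re-derived by re-running the enclosing branch --- itself a derivation with smaller history satisfying the hypothesis. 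This surgery on $\rn{Cycle}$-leaves is the main technical obstacle: one must argue that replacing such a leaf by a copy of the subderivation that established the matching history entry yields a well-founded (finite) proof tree, which works precisely because histories are finite and the process/type are regular, so only finitely many distinct $\rn{Cycle}$ targets ever occur.

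For parts~\ref{key3} and~\ref{key4}, I would argue as follows. Given the hypotheses of part~\ref{key3}, the side conditions of Rule $\rn{Out}$ that do not mention the history --- namely $\G\parG\Msg$ is $\pas$-sound, and $(\Plays{\Nt}\setminus\Plays{\G})\cap\pas=\emptyset$ for the relevant $\G=\agtO{\pp}{\q}{i}{I}{\la}{\G}$ --- follow from \refToLemma{aa} applied to the hypotheses (each premise judgement has, by \refToLemma{aa}, $\pas$-soundness of $\G_i\parG\Msg\cdot\mq\pp{\la_i}\q$ and the player-inclusion for $\pP\pp{\PP_i}\parN\Nt$, from which the corresponding facts for $\G\parG\Msg$ and $\pP\pp{\PP}\parN\Nt$ are immediate by the definitions of $\pas$-soundness, of $\Plays{-}$, and of $\plays{-}$, using $\plays{\PP}=\set\q\cup\bigcup_{i\in I}\plays{\PP_i}$). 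The only remaining side condition of $\rn{Out}$ is $\pair{\pP\pp{\PP}\parN\Nt\parN-}\G\not\in\Hset$, which with the empty history $\Hset=\emptyset$ is trivially satisfied. Hence Rule $\rn{Out}$ applies with empty history and premises exactly the hypotheses of part~\ref{key3} (which are history-free, hence in particular hold with history $\pair{\pP\pp{\PP}\parN\Nt\parN\Msg}\G$ by weakening the history --- a trivial monotonicity of the rules, since enlarging the history can only make $\rn{Cycle}$ more applicable and never invalidates any side condition of $\rn{End}$, $\rn{Out}$, $\rn{In}$ except the freshness conditions, which here are about pairs distinct from the added one by the argument above). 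Part~\ref{key4} is identical, using Rule $\rn{In}$ in place of Rule $\rn{Out}$, with the single premise $\tyng\pas{\G}{\pP\pp{\PP_h}\parN\Nt\parN\Msg}$ and the observation that the external-choice process $\inp{\q}{i}{I}{\la}{\PP}$ may list more branches than appear in $\G=\agtI{\pp}{\q}{i}{I}{\la_h}{\G}$, as $\rn{In}$ explicitly allows. I expect the weakening-of-history remark and the $\rn{Cycle}$-surgery in part~\ref{key2} to be the only places requiring care; everything else is bookkeeping with the definitions of $\Plays{-}$, $\plays{-}$, and $\pas$-soundness together with \refToLemma{aa}.
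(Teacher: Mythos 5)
Your overall strategy matches the paper's: part~\ref{key2} by induction on the derivation with ``surgery'' on the \rn{Cycle} leaves that reference the erased pair, and parts~\ref{key3}--\ref{key4} by weakening histories and re-applying \rn{Out}/\rn{In}. But there is a genuine gap in parts~\ref{key3} and~\ref{key4}. Your history-weakening step adds $\pair{\pP\pp{\PP}\parN\Nt\parN\Msg}\G$ to every history in the given derivations and claims the freshness side conditions are unaffected because they concern ``pairs distinct from the added one by the argument above''. No such argument is actually given, and the claim is false precisely in the situation these lemmas are needed for: when the session is recursive, a judgement $\Hset\tyng\pas{\G}{\pP\pp{\PP}\parN\Nt\parN\Msg'}$ typically re-occurs \emph{inside} the derivation of the premises $\tyng\pas{\G_i}{\cdots}$ as the conclusion of another \rn{Out} application, and adding the new pair to the histories then violates that inner application's condition $\pair{\pP\pp{\PP}\parN\Nt\parN-}\G\not\in\Hset$. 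The paper splits into two cases: if no such inner occurrence exists, weakening plus one application of \rn{Out}/\rn{In} works as you describe; if one does exist, it argues that its queue must coincide with $\Msg$ and concludes directly from part~\ref{key2}. Your proof is missing this second case.

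For part~\ref{key2} your plan is the right one, but the key construction is left unspecified. You say a \rn{Cycle} leaf pointing at the erased pair is ``replaced by a copy of the subderivation that established the matching history entry'', with termination justified because ``only finitely many distinct \rn{Cycle} targets ever occur''. Pasting a full copy does not obviously terminate or remain well formed: the copy contains further \rn{Cycle} leaves on the same pair, and prepending the accumulated history $\Hset'$ to the copy can break freshness conditions deep inside it wherever a pair of $\Hset'$ recurs as the subject of an \rn{Out}/\rn{In} application. What makes the paper's construction go through is that the pasted proof is truncated at depth one: re-apply the root rule of the original derivation and immediately close every non-axiom premise with \rn{Cycle}, which is legal because the corresponding pairs have already been accumulated into $\Hset'$ along the branch leading to the leaf being repaired. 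Without this (or an equivalent explicit well-foundedness argument) the surgery step, which you yourself identify as the main obstacle, is not justified.
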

\begin{proof}
\ref{key2}.
By induction on the distance $d$ between $\Hset\tyng\pas{\G}{\Nt\parN\Msg}$ and
$\tyng\pas{\G'}{\Nt'\parN\Msg'}$ in the derivation of  $\tyng\pas{\G'}{\Nt'\parN\Msg'}$.
The case $d=0$ is trivial.

{\em Case $d=1$}.
Then $\Hset\tyng\pas{\G}{\Nt\parN\Msg}$ is a premise of a rule whose conclusion is 
$\tyng\pas{\G'}{\Nt'\parN\Msg'}$,  which implies $\Hset=\pair{\Nt'\parN\Msg'}{\G'}$. 
We can now build a derivation of $\tyng\pas{\G}{\Nt\parN\Msg}$ out of the derivation of  
$\pair{\Nt'\parN\Msg'}{\G'}\tyng\pas{\G}{\Nt\parN\Msg}$, as follows. 
First we erase everywhere $\pair{\Nt'\parN\Msg'}{\G'}$ from the histories present in the derivation.
This operation does not affect the correctness of the applicability conditions of
Axiom \rn{End} and Rules \rn{In} and \rn{Out}.
Axiom \rn{Cycle}, instead, is affected by such an erasing only in case  $\pair{\Nt'\parN\Msg'}{\G'}$
is the triple used in the axiom, namely $\Hset',\pair{\Nt'\parN\Msg'}{\G'}\tyng\pas{\G'}{\Nt'\parN\Msg'}$  is the axiom conclusion.
In such a case,  we replace this application of  Axiom  $\rn{Cycle}$ by  a proof of 
$\Hset'\tyng\pas{\G'}{\Nt'\parN\Msg'}$ built out of  the derivation $\mathcal{D}$ of  
$\tyng\pas{\G'}{\Nt'\parN\Msg'}$ in the following way.\\
Let us consider the premises of the last rule in the derivation $\mathcal{D}$. 
 For the premises which are axioms there is nothing to do. For the other premises we need to modify the derivation as follows.\\ Let 
$\pair{\Nt'\parN\Msg'}{\G'}\tyng\pas{\widehat\G}{\widehat\Nt\parN\widehat\Msg}$
 be   obtained as conclusion of
either Rule \rn{In} or Rule \rn{Out} with premises having
$\pair{\Nt'\parN\Msg'}{\G'}, \pair{\widehat\Nt\parN\widehat\Msg}{\widehat\G}$
as histories.
$\mathcal{D}$ has hence the form

\ClineL{
\prooftree
\quad \cdots\quad
\prooftree
\quad\cdots \quad \pair{\Nt'\parN\Msg'}{\G'}, \pair{\widehat\Nt\parN\widehat\Msg}{\widehat\G} \tyng\pas{\_}{\_} \quad \dots\quad
\justifies
\pair{\Nt'\parN\Msg'}{\G'}\tyng\pas{\widehat\G}{\widehat\Nt\parN\widehat\Msg}
\using \rn{In}/\rn{Out}
\endprooftree
\quad\cdots\quad
\justifies
\tyng\pas{\G'}{\Nt'\parN\Msg'}
\using \rn{In}/\rn{Out}
\endprooftree
}

\noindent
Notice that this implies that $\pair{\widehat\Nt\parN\widehat\Msg}{\widehat\G}\in \Hset'$.
We can hence transform the above derivation in a derivation of $\Hset'\tyng\pas{\G'}{\Nt'\parN\Msg'}$   as follows:

\ClineL{
\prooftree
\quad \cdots\quad
\prooftree
\justifies
\Hset',\pair{\Nt'\parN\Msg'}{\G'}\tyng\pas{\widehat\G}{\widehat\Nt\parN\widehat\Msg}
\using \rn{Cycle}
\endprooftree
\quad\cdots\quad
\justifies
\Hset'\tyng\pas{\G'}{\Nt'\parN\Msg'}
\using \rn{In}/\rn{Out}
\endprooftree
}

{\em Case $d>1$}.
Let  $\pair{\Nt'\parN\Msg'}{\G'}\tyng\pas{\widehat\G}{\widehat\Nt\parN\widehat\Msg}$  be an  arbitrary  premise of a rule whose conclusion is $\tyng\pas{\G'}{\Nt'\parN\Msg'}$.
By the construction described in the base case, we can get a derivation $\mathcal{D}$ for 
$\tyng\pas{\widehat\G}{\widehat\Nt\parN\widehat\Msg}$
containing a subderivation for 
$\Hset\setminus\set{\pair{\Nt'\parN\Msg'}{\G'}}\tyng\pas{\G}{\Nt\parN\Msg}$.
In $\mathcal{D}$ the distance between  
$\tyng\pas{\widehat\G}{\widehat\Nt\parN\widehat\Msg}$
and
$\Hset\setminus\set{\pair{\Nt'\parN\Msg'}{\G'}}\tyng\pas{\G}{\Nt\parN\Msg}$ is
$d-1$. So, by the induction hypothesis, we  conclude 
$\tyng\pas{\G}{\Nt\parN\Msg}$. 

\ref{key3}. Let $\G=\agtO{\pp}{\q}{i}{I}{\la}{\G}$ and $\Nt'=\pP\pp{\oup{\q}{i}{I}{\la}{\PP}}\parN\Nt$.  If a statement $\Hset\tyng\pas{\G}{\Nt'\parN\Msg'}$ does not occur in the derivation of 
$\tyng\pas{\G_i}{\pP\pp{\PP_i}\parN\Nt\parN\Msg\cdot\mq\pp{\la_i}{\q}}$,  then 
we can simply add  $\pair{\Nt'\parN\Msg}{\G}$  to the
histories of the derivation, so getting a still correct derivation, and then apply Rule $\rn{Out}$.
Otherwise this statement must also be the conclusion of an application of Rule $\rn{Out}$ with premises ${\Hset,\pair{\Nt'\parN\Msg'}\G}\tyng\pas{\G_i}{\pP\pp{\PP_i}\parN\Nt\parN\Msg'\cdot\mq\pp{\la_i}{\q}}$ for all $i\in I$. 
This implies $\Msg'\equiv\Msg$.
 Since $\Hset\tyng\pas{\G}{\Nt'\parN\Msg}$ occurs in a derivation of  $\tyng\pas{\G_i}{\pP\pp{\PP_i}\parN\Nt\parN\Msg\cdot\mq\pp{\la_i}{\q}}$,  by Point~(\ref{key2}) we conclude $\tyng\pas{\G}{\Nt'\parN\Msg}$.

\ref{key4}. Let $\G'=\agtI{\pp}{\q}{i}{I}{\la_h}{\G}$ and $\Nt'=\pP\pp{\inp{\q}{i}{I}{\la}{\PP}}\parN\Nt$ and $\Msg'\equiv\mq\q{\la_h}\pp\cdot\Msg$. 
If the derivation of 
$\tyng\pas{\G}{\pP\pp{\PP_h}\parN\Nt\parN\Msg}$ does not contain a statement $\Hset\tyng\pas{\G'}{\Nt'\parN\Msg''}$,  then
we can simply add  $\pair{\Nt'\parN\Msg}{\G'}$  to the
histories of the derivation, so getting a still correct derivation, and then apply Rule $\rn{In}$. 
Otherwise this statement must also be the conclusion of an application of Rule $\rn{In}$ with premise  ${\Hset,\pair{\Nt'\parN\Msg''}{\G'}}\tyng\pas{\G}{\pP\pp{\PP_h}\parN\Nt\parN\Msg}$.  
This implies $\Msg''\equiv\Msg'$.
 Since $\Hset\tyng\pas{\G'}{\Nt'\parN\Msg'}$ occurs in a derivation of $\tyng\pas{\G}{\pP\pp{\PP_h}\parN\Nt\parN\Msg}$, by Point~(\ref{key2}) we conclude $\tyng\pas{\G'}{\Nt'\parN\Msg'}$.
\end{proof}

 We can now show that the reductions of type configurations are matched by the reductions of the sessions. 

\begin{theorem}[Session Fidelity]\label{thm:sf}
If $\tyng\pas{\G}{\Nt\parN\Msg}$ and 
$\G\parG\Msg\stackred\co\G'\parN\Msg'$, then $\Nt\parN\Msg\stackred{\co}\Nt'\parN\Msg'$ and  $\tyng\pas{\G'}{\Nt'\parN\Msg'}$. 
\end{theorem}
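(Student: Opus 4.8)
The plan is to argue by induction on the derivation of the type-configuration transition $\G\parG\Msg\stackred\co\G'\parG\Msg'$, splitting on which of the four rules of \refToFigure{fig:ltsgtAs} is used last. The crucial preliminary remark is that the top constructor of $\G$ pins down the last rule of the typing derivation of $\tyng\pas{\G}{\Nt\parN\Msg}$: since the history is empty rule \rn{Cycle} is impossible, and since $\G\neq\End$ in every one of the four cases rule \rn{End} is impossible too, so the last typing rule is \rn{Out} when $\G$ is an output type and \rn{In} when $\G$ is an input type. Such an inversion forces $\Nt\equiv\pP\pp{\PP}\parN\Nt_0$ with $\PP$ a choice at $\pp$ matching the topmost communication of $\G$ (an internal choice carrying $\G$'s labels for \rn{Out}, an external choice one of whose branches carries $\G$'s label for \rn{In}), and it exposes the premises of \rn{Out}/\rn{In}; these premises carry a one-element-larger history, which point~(\ref{key2}) of \refToLemma{key} allows me to erase, so that from here on I work with history-free hypotheses of exactly the shape needed to feed the induction hypothesis and points~(\ref{key3})--(\ref{key4}) of \refToLemma{key}.

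For the two axioms I argue directly. If the last transition rule is \rn{Top-Out}, then $\G=\agtO{\pp}{\q}{i}{I}{\la}{\G}$, $\co=\CommAs\pp{\la_h}\q$, $\G'=\G_h$ and $\Msg'=\Msg\cdot\mq\pp{\la_h}\q$ with $h\in I$; the $i=h$ premise of \rn{Out}, stripped of its history, reads $\tyng\pas{\G_h}{\pP\pp{\PP_h}\parN\Nt_0\parN\Msg\cdot\mq\pp{\la_h}\q}$, and on the session side rule \rn{Send} fires precisely to $\pP\pp{\PP_h}\parN\Nt_0\parN\Msg\cdot\mq\pp{\la_h}\q$, so $\Nt'\parN\Msg'$ is this session and it is typed by $\G'=\G_h$. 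The case \rn{Top-In} is symmetric: $\G$ is an input type $\agtI{\pp}{\q}{i}{I}{\la_h}{\G'}$, $\Msg\equiv\mq\q{\la_h}\pp\cdot\Msg'$ and $\co=\CommAsI\pp{\la_h}\q$; inverting \rn{In} yields $\PP=\inp{\q}{i}{I}{\la}{\PP}$ with $h\in I$ and the history-free premise $\tyng\pas{\G'}{\pP\pp{\PP_h}\parN\Nt_0\parN\Msg'}$, while rule \rn{Rcv} fires the session to that very configuration.

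For the two context rules I apply the induction hypothesis to the sub-transition(s) and reassemble with \refToLemma{key}. In the case \rn{Inside-Out} we have $\G=\agtO{\pp}{\q}{i}{I}{\la}{\G}$ (so $\Nt\equiv\pP\pp{\oup{\q}{i}{I}{\la}{\PP}}\parN\Nt_0$), $\G'=\agtO{\pp}{\q}{i}{I}{\la}{\G'}$, $\pp\neq\play\co$, and for each $i\in I$ a sub-transition $\G_i\parG\Msg\cdot\mq\pp{\la_i}\q\stackred\co\G'_i\parG\Msg'\cdot\mq\pp{\la_i}\q$; paired with the history-free \rn{Out} premises this gives, by the induction hypothesis, $\pP\pp{\PP_i}\parN\Nt_0\parN\Msg\cdot\mq\pp{\la_i}\q\stackred\co\Nt_i\parN\Msg'\cdot\mq\pp{\la_i}\q$ together with $\tyng\pas{\G'_i}{\Nt_i\parN\Msg'\cdot\mq\pp{\la_i}\q}$. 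Because $\pp\neq\play\co$ the component $\pP\pp{\PP_i}$ is passive, and because the trailing message $\mq\pp{\la_i}\q$ survives the step it is not the one fetched; hence all these transitions are, modulo that passive component, one and the same transition $\Nt_0\parN\Msg\stackred\co\Nt_0'\parN\Msg'$, so $\Nt_i\equiv\pP\pp{\PP_i}\parN\Nt_0'$. Restoring $\pP\pp{\PP}$ gives $\Nt\parN\Msg\stackred\co\pP\pp{\PP}\parN\Nt_0'\parN\Msg'$, and point~(\ref{key3}) of \refToLemma{key} collapses the family $\set{\tyng\pas{\G'_i}{\pP\pp{\PP_i}\parN\Nt_0'\parN\Msg'\cdot\mq\pp{\la_i}\q}}_{i\in I}$ into $\tyng\pas{\G'}{\pP\pp{\PP}\parN\Nt_0'\parN\Msg'}=\tyng\pas{\G'}{\Nt'\parN\Msg'}$. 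The case \rn{Inside-In} is analogous with a single sub-transition on the tail $\Msg_0$ of $\Msg\equiv\mq\q{\la_h}\pp\cdot\Msg_0$: inverting \rn{In} and stripping the history gives $\tyng\pas{\G_0}{\pP\pp{\PP_h}\parN\Nt_0\parN\Msg_0}$ with $h\in I$, the induction hypothesis applies to it, the underlying transition involves neither $\pp$ nor the head message $\mq\q{\la_h}\pp$ (readable only by $\pp$, while any other fetched message commutes to the front past it) and so lifts to a transition of the original session, and point~(\ref{key4}) of \refToLemma{key} re-types the result.

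The step I expect to be the real obstacle is the reassembly in the case \rn{Inside-Out}: one must be sure the $|I|$ session transitions returned by the induction hypothesis are literally the same transition of $\Nt_0$ and $\Msg$, so that they sit coherently under the single branching process $\pP\pp{\oup{\q}{i}{I}{\la}{\PP}}$ and can be folded back by \refToLemma{key}. This rests on the locality of rules \rn{Send} and \rn{Rcv} with respect to a component that is not the player of $\co$, together with the remark that a message lying at the tail of a queue and kept by the type-level step cannot have been consumed; everything else --- carrying queues through $\equiv$, matching the branch labels of the choice at $\pp$ against those of $\G$ --- is routine bookkeeping.
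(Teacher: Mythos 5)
Your proposal is correct and follows essentially the same route as the paper's proof: the same induction on the derivation of the type-configuration transition, the same inversion of the last typing rule (\rn{Cycle} excluded by the empty history, \rn{End} by $\G\neq\End$), the same use of Lemma~\ref{key}(\ref{key2}) to strip histories from the \rn{Out}/\rn{In} premises and of points~(\ref{key3})--(\ref{key4}) to reassemble after the inductive step. The point you flag as the real obstacle --- that in the \rn{Inside-Out} case the $|I|$ session transitions coincide on $\Nt_0\parN\Msg$ because $\pp\neq\play\co$ keeps $\pP\pp{\PP_i}$ passive and the trailing messages $\mq\pp{\la_i}\q$ are never the ones fetched --- is exactly the observation the paper makes at the corresponding point.
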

\begin{proof} The proof is by cases on the last applied axiom/rule in the derivation of $\tyng\pas\G{\Nt\parN\Msg}$.

 {\em Axiom} \rn{$\End$}. Impossible since $\G=\End$  and 
 $\End\parG\Msg \not\rightarrow$.

{\em Axiom} \rn{Cycle}.  Impossible since the history cannot be empty.

{\em Rule} \rn{Out}.
In such a case $\G=\agtO{\pp}{\q}{i}{I}{\la}{\G}$ and
$\Nt = \pP\pp{\PP}\parN\widehat\Nt$
and $\PP=\oup{\q}{i}{I}{\la}{\PP}$ and  
\begin{equation}\label{eq1}
 \pair{\pP\pp{\PP}\parN\widehat\Nt\parN\Msg}\G  \tyng\pas{\G_i}{\pP\pp{\PP_i}\parN\widehat\Nt\parN\Msg\cdot\mq\pp{\la_i}{\q}}\quad \forall  i  \in I 
\end{equation}
We proceed by induction on  the 
 height $t$ of the derivation of $\G\parG\Msg \stackred{\co} \G'\parG\Msg'$.\\
\underline{\em Case $t=1$.} Then $\G\parG\Msg \stackred{\co} \G'\parG\Msg'$ is necessarily obtained
by Axiom \rn{Top-Out}, that is  $\co = \CommAs\pp{\la_h}{\q}$  for some  $h\in I$, and

\ClineL{
\NamedRule{\rn{Top-Out}}
 { } 
 {\agtO{\pp}{\q}i I{\la}{\G}\parG\Msg 
 \stackred{\CommAs\pp{\la_h}{\q}}
 \G_h\parG\addMsg\Msg{\mq\pp{\la_h}{\q}}}  {h\in I}
}

\noindent
By Rule \rn{Send} we have that

\ClineL{
 {\pP\pp{\PP}\parN\widehat\Nt\parN\Msg}
\stackred{\CommAs\pp{\la_h}{\q}}
\pP\pp{\PP_h}\parN\widehat\Nt\parN\Msg\cdot\mq\pp{\la_h}{\q} 
} 

\noindent
Now, by (\ref{eq1}) we have
$\pair{\pP\pp{\PP}\parN\Nt\parN\Msg}\G  \tyng\pas{\G_h}{\pP\pp{\PP_h}\parN\Nt\parN\Msg\cdot\mq\pp{\la_h}{\q}}$, and 
by Lemma \ref{key}(\ref{key2})
we get the rest of the thesis, namely 
$\tyng\pas{\G_h}{\pP\pp{\PP_h}\parN\widehat\Nt\parN\Msg\cdot\mq\pp{\la_h}{\q}}$. 

\noindent
 \underline{\em Case $t>1$.} Then $\G\parG\Msg \stackred{\co} \G'\parG\Msg'$ is necessarily obtained by  Rule  \rn{Inside-Out}, that is 
 
 \ClineL{
 \NamedRule{{\rn{Inside-Out}}}
 {\G_i\parG \Msg\cdot\mq\pp{\la_i}{\q} \stackred\asCom\G'_i \parG \Msg'\cdot\mq\pp{\la_i}{\q} \quad \forall i \in I} 
 {\agtO{\pp}{\q}i I{\la}{\G}\parG\Msg \stackred \asCom\agtO{\pp}{\q}i I{\la}{\G'}\parG\Msg'}  
 {\pp\ne\play{\asCom} }
 }
 
 \noindent
 From (\ref{eq1}) and Lemma \ref{key}(\ref{key2})   
 we can infer that 
 
 \ClineL{
  \tyng\pas{\G_i}{\pP\pp{\PP_i}\parN\widehat\Nt\parN\Msg\cdot\mq\pp{\la_i}{\q}}
  \quad \forall  i  \in I 
 }
 
\noindent
We can now recur to the induction hypothesis, getting

\ClineL{
\pP\pp{\PP_i}\parN\widehat\Nt\parN\Msg\cdot\mq\pp{\la_i}{\q} 
\stackred{\co}
\pP\pp{\PP_i}\parN\widehat\Nt'\parN\Msg'\cdot\mq\pp{\la_i}{\q}
\qquad \forall  i  \in I 
}

\noindent
and

\ClineL{
  \tyng\pas{\G'_i}{ \pP\pp{\PP_i}\parN\widehat\Nt'\parN\Msg'\cdot\mq\pp{\la_i}{\q} } \qquad \forall  i  \in I 
 }
 
\noindent
Notice that the condition $\pp\ne\play{\asCom}$ ensures that, for each $i\in I$,
the transition  does not modify the  process  of participant $\pp$.
Moreover, the  transition  does not depend on the messages
  $\mq\pp{\la_i}{\q} $,  since these messages are at the end of
  the queue both before and after the transitions. 
So, we can infer that 

  \ClineL{
   \pP\pp{\PP}\parN\widehat\Nt\parN\Msg
  \stackred{\co}
  \pP\pp{\PP}\parN\widehat\Nt'\parN\Msg' 
  }
  
  \noindent
Lemma \ref{key}(\ref{key3}) 
applied to 

  \ClineL{
   \tyng\pas{\G'_i}{ \pP\pp{\PP_i}\parN\widehat\Nt'\parN\Msg'\cdot\mq\pp{\la_i}{\q}} \text{ for all }i\in I
  }
  
  \noindent
   gives
  $
  \tyng\pas{\G'}{\pP\pp{\PP}\parN\widehat\Nt'\parN\Msg'}$.

{\em Rule} \rn{In}.
 In such a case $\G=\agtI{\pp}{\q}i I{\la_h}{\G'}$ and
$\Nt= \pP\pp{\PP}\parN\widehat\Nt$ and $\Msg\equiv\mq\q{\la_h}\pp\cdot \widehat\Msg$, with $h\in I$ and  $\PP=\inp{\q}{i}{I}{\la}{\PP}$ and 

\begin{equation}\label{eq2}
\pair{\pP\pp{\PP}\parN\widehat\Nt\parN\Msg}\G  \tyng\pas{\G'}{\pP\pp{\PP_{h}}\parN\widehat\Nt\parN\widehat\Msg}
\end{equation}

\noindent
We proceed by induction on  the 
 height $t$ of the derivation of $\G\parG\Msg \stackred{\co} \G'\parG\Msg'$.\\
\underline{\em Case $t=1$.} Then $\G\parG\Msg \stackred{\co} \G'\parG\Msg'$ is necessarily obtained
by Axiom \rn{Top-In}, that is $\co = \CommAsI{\pp}{\la_h}\q$ and 

\ClineL{
\NamedRule{{\rn{Top-In}}}
 { } 
 {\agtI \pp\q i I{\la_h} {\G'}\parG\addMsg{\mq{\q}{\la_h}\pp}\widehat\Msg
 \stackred{\CommAsI{\pp}{\la_h}\q}
 \G'\parG\widehat\Msg}
 {}
}

\noindent
By  Rule  \rn{Rcv} we have that 

\ClineL{
{\pP\pp{\PP}\parN\widehat\Nt\parN\addMsg{\mq{\q}{\la_h}\pp}\widehat\Msg}
\stackred{\CommAsI{\pp}{\la_h}\q}
{\pP\pp{\PP_h}\parN\widehat\Nt\parN\widehat\Msg}
}

\noindent
We can now get the rest of the thesis since (\ref{eq2}) and Lemma \ref{key}(\ref{key2}) 
imply 

\ClineL{
 \tyng\pas{\G'}{\pP\pp{\PP_{h}}\parN\widehat\Nt\parN\widehat\Msg}
}

\noindent
\underline{\em Case $t>1$.} 
Then $\G\parG\Msg \stackred{\co} \G'\parG\Msg'$ is necessarily obtained by Rule \rn{Inside-In}, that is

\ClineL{
\NamedRule{{\rn{Inside-In}}}
 {\G'\parG \widehat\Msg\stackred\asCom\G''\parG \widehat\Msg'}
 { \agtI \pp\q i I {\la_h} {\G'}\parG \addMsg{\mq{\q}{\la_h}\pp}{\widehat\Msg} \stackred\asCom
 \agtI \pp\q i {I} {\la_h} {\G''}\parG  {\addMsg{\mq{\q}{\la_h}\pp}{\widehat\Msg'}} 
}  
 {\begin{array}{c} 
 \pp\ne\play{\asCom} 
 \end{array}}
}

\noindent
Now, (\ref{eq2}) and Lemma \ref{key}(\ref{key2})  imply 

\ClineL{
 \tyng\pas{\G'}{\pP\pp{\PP_{h}}\parN\widehat\Nt\parN\widehat\Msg} 
}

\noindent
We can hence recur to the induction hypothesis, getting that

\ClineL{
\pP\pp{\PP_{h}}\parN\widehat\Nt\parN\widehat\Msg
\stackred\asCom
\pP\pp{\PP_{h}}\parN\widehat\Nt'\parN\widehat\Msg'   
}

\noindent
and 

\ClineL{
\tyng\pas{\G''}{\pP\pp{\PP_{h}}\parN\widehat\Nt'\parN\widehat\Msg'} 
}

\noindent
since the side condition $\pp\ne\play{\asCom}$ of Rule \rn{Inside-In} implies that the process of participant $\pp$ is unchanged.
Lemma \ref{key}(\ref{key4}) applied to 

\ClineL{
\tyng\pas{\G''}{\pP\pp{\PP_{h}}\parN\widehat\Nt'\parN\widehat\Msg'} 
}

\noindent
gives the rest of the thesis, namely

\vspace{3mm}

 $\qquad\qquad\qquad\qquad\qquad\tyng\pas{ \agtI \pp\q i {I} {\la_h} \G''}{\pP\pp{\PP}\parN\widehat\Nt'\parN\mq{\q}{\la_h}\pp\cdot\widehat\Msg'}$  \end{proof}

 The proof of Subject Reduction requires some lemmas which are typical of our partial typing. 
The first lemma deals with participants which have active processes but are not players of global types. The second lemma deals with messages whose receivers are not  players of global types. 
The last lemma states that a player of the network whose  lock-freedom  
must be ensured is always a player of the global type.

\begin{lemma}\label{a} 
If  $\tyng\pas\G{\pP\pp\PP\parN\Nt\parN\Msg}$, $\PP\neq\inact$ and $\pp\not\in\Plays{\G}$, then $\tyng\pas\G{\pP\pp{\PP'}\parN\Nt\parN\Msg}$  for  any  arbitrary $\PP'$ such that $\pp\not\in\plays{\PP'}$.
\end{lemma}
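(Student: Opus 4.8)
The plan is to show that the component $\pP\pp\PP$ is completely inert in the given derivation — from $\G$'s point of view — so that it may be replaced textually by $\pP\pp{\PP'}$ without invalidating any inference. Fix a derivation $\mathcal D$ of $\tyng\pas\G{\pP\pp\PP\parN\Nt\parN\Msg}$. First I would record, by induction on the structure of $\mathcal D$, that \emph{every global type occurring in $\mathcal D$ is a subterm of $\G$}: Axioms \rn{End} and \rn{Cycle} introduce no new global type, while the premises of \rn{Out} and of \rn{In} carry, respectively, a branch of $\agtO\pp\q i I\la\G$ and the continuation of $\agtI\pp\q i I{\la_h}{\G'}$, i.e. subterms of the conclusion's type. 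Since $\Plays{-}$ is monotone along the subterm order (immediate from its defining equations), $\pp\not\in\Plays\G$ gives $\pp\not\in\Plays{\G_0}$ for \emph{every} global type $\G_0$ appearing in $\mathcal D$. Hence $\pp$ is never the active participant of an instance of \rn{Out} or \rn{In}, because those rules force the conclusion's global type to be headed by that participant as sender/receiver, so to have it among its players. It follows that in every judgement of $\mathcal D$ the component $\pP\pp\PP$ sits inside the passive context of the applied rule, and — $\pp$ being never active, hence never reduced — carries literally the process $\PP$ all the way up. Finally, \refToLemma{aa} applied to the hypothesis, together with $\PP\ne\inact$, gives $\pp\not\in\pas$.

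Next I would take $\mathcal D'$ to be $\mathcal D$ with every session $\pP\pp\PP\parN\Nt_0\parN\Msg_0$ occurring in it — whether typed at a node or appearing inside a history — replaced by $\pP\pp{\PP'}\parN\Nt_0\parN\Msg_0$, and verify that $\mathcal D'$ is a valid derivation of $\tyng\pas\G{\pP\pp{\PP'}\parN\Nt\parN\Msg}$, inspecting each rule instance. Well-formedness of all the networks of $\mathcal D'$ follows from $\pp\not\in\plays{\PP'}$ (it excludes self-messages; the other clauses are inherited from $\mathcal D$). The $\pas$-soundness clauses and the syntactic constraints on $\G_0$ and on the active participant's process are untouched, as they do not mention $\pp$'s process. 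The freshness conditions and the matching requirement of \rn{Cycle} survive because the substitution is uniform: as noted above $\pp$ carries the one process $\PP$ in every pair of every history of $\mathcal D$, so replacing it by $\PP'$ is injective on those pairs (and if $\PP'=\inact$ it just deletes the $\pp$-entry, still injective).

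It remains to revisit the player side conditions, and this is where the previous observations pay off: the requirement $\Plays{\Nt_0}\cap\pas=\emptyset$ of \rn{End} and $(\Plays{\Nt_0}\setminus\Plays{\G_0})\cap\pas=\emptyset$ of \rn{Out}/\rn{In} speak of $\Plays{-}$ of \emph{networks}, i.e. of which participants hold an active component. The swap $\pP\pp\PP\mapsto\pP\pp{\PP'}$ never enlarges that set: it leaves it unchanged when $\PP'\ne\inact$, and removes $\pp$ when $\PP'=\inact$. In the first case the side conditions still hold, using that they held in $\mathcal D$ and that $\pp\not\in\pas$; in the second they hold a fortiori. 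Therefore $\mathcal D'$ is a valid derivation, and we are done. The one point I expect to require real care is this bookkeeping of histories — the subterm/monotonicity observation and the uniform substitution that keeps \rn{Cycle} and the freshness conditions intact; the remainder is the transparent fact that a participant untracked by $\G$, and necessarily outside $\pas$, merely rides along the derivation, so its process enters no side condition.
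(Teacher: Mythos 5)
Your proof is correct and follows essentially the same route as the paper's: since $\pp\not\in\Plays{\G}$ (and hence $\pp$ is not a player of any global type in the derivation), the component $\pP\pp\PP$ is never the active process of an instance of \rn{Out} or \rn{In}, so one may uniformly replace $\PP$ by $\PP'$ throughout the derivation and its histories. The paper's proof is just a terser statement of this; your added bookkeeping (subterm/monotonicity of $\Plays{-}$, injectivity of the substitution for \rn{Cycle}, preservation of the player side conditions) fills in exactly the details the paper leaves implicit.
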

\begin{proof} 
If  $\pp\not\in\Plays{\G}$, then the process $\PP$ can never be involved in any occurrence of Rules  $\rn{Out}$ or $\rn{In}$.  This implies that $\pP\pp\PP$ must occur only in axioms. It is hence enough to replace $\PP$ by $\PP'$ in those axioms and 
modify the histories present in the derivation accordingly.
\end{proof}

\begin{lemma}\label{b} 
If  $\tyng\pas\G{\Nt\parN\addMsg{\mq{\q}{\la}\pp}\Msg}$ and $\pp\not\in\Plays{\G}$, then  $\tyng\pas\G{\Nt\parN\Msg}$.
\end{lemma}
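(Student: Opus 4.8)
The plan is to transform the given derivation into a derivation of $\tyng\pas\G{\Nt\parN\Msg}$ by simply deleting the message $\mq\q\la\pp$ wherever it appears. The first step is the observation that every global type occurring in a derivation of $\tyng\pas\G{\Nt\parN\mq\q\la\pp\cdot\Msg}$ — whether inside a judgement or inside a history — occurs in $\G$: this is immediate by inspection of Rules \rn{Out} and \rn{In} and Axioms \rn{End}, \rn{Cycle} in \refToFigure{fig:cntr}, since each premise carries a continuation of the global type of its conclusion, and histories only ever collect conclusions of \rn{Out}/\rn{In}. Since $\Plays{\G_1}\subseteq\Plays{\G_2}$ whenever $\G_1$ occurs in $\G_2$ (immediate from the defining equations of $\Plays{\cdot}$) and $\pp\notin\Plays\G$, no global type occurring in the derivation has $\pp$ as a player. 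In particular, no application of \rn{In} in the derivation has $\pp$ as its reader — its global type would have to be of the form $\agtI\pp\pr i J{\la'}{\G'}$, of which $\pp$ is a player — hence no application of \rn{In} ever consumes a message whose receiver is $\pp$; and \rn{Out} never removes a message.

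The second step is to follow the displayed occurrence of $\mq\q\la\pp$ at the head of the root queue upwards through the derivation. Passing from the conclusion of \rn{Out} to a premise appends a message at the tail of the queue; passing from the conclusion of \rn{In} to its premise removes the oldest message for some sender/receiver pair, and by the first step that receiver is never $\pp$; and the queue recorded in a history pair is the queue of the conclusion of the \rn{Out}/\rn{In} application that created it. It follows that in every judgement and every history pair occurring in the derivation the queue is structurally congruent to $\mq\q\la\pp\cdot\Msg^\star$, where $\Msg^\star$ is obtained by deleting the oldest $\q$-to-$\pp$ message (which, by the tracking, carries the label $\la$) and is determined up to $\equiv$. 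Let $\Phi$ be the map performing this deletion on every queue, leaving networks and global types untouched, extended to histories pair-wise.

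The last step is the routine verification that $\Phi$ applied to the whole derivation yields again a valid derivation, necessarily with conclusion $\tyng\pas\G{\Nt\parN\Msg}$ (the root history, being empty, is mapped to itself). Networks and global types are unchanged, so the side conditions $\Plays\Nt\cap\pas=\emptyset$ of \rn{End}, $(\Plays\Nt\setminus\Plays\G)\cap\pas=\emptyset$ of \rn{Out}/\rn{In}, the shapes of processes and global types required by \rn{Out}/\rn{In}, the membership premise of \rn{Cycle}, and the freshness conditions $\pair{\pP\pp\PP\parN\Nt\parN-}\G\notin\Hset$ all persist verbatim (for the last one: a pair $\pair{\ldots\parN\Msg^\star}{\G}\in\Phi(\Hset)$ would come from a pair $\pair{\ldots\parN Q}{\G}\in\Hset$, contradicting the original freshness). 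The queue relations between conclusions and premises demanded by \rn{Out} and \rn{In} are respected because $\Phi$ commutes with appending a message at the tail and with removing a head message whose receiver is not $\pp$. Finally, the only side conditions mentioning queues are the $\pas$-soundness requirements on type configurations, and by \refToDef{ps} these impose finiteness conditions only on messages that are \emph{present}, so deleting a message can only relax them. The main obstacle is none of these checks in isolation but the bookkeeping behind the second step: isolating the single ``inert'' occurrence of $\mq\q\la\pp$ and making sure the structural congruence on queues never forces it to move to the head of a pair read by \rn{In}; once that is pinned down, the rest is immediate.
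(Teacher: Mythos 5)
Your proof is correct and follows essentially the same route as the paper's: the paper likewise observes that, since $\pp$ is not a player of $\G$ (nor of any of the continuations of $\G$ appearing higher in the derivation), Rule \rn{In} can never consume a message with receiver $\pp$, so $\mq{\q}{\la}{\pp}$ persists in every queue of the derivation and can be deleted uniformly without invalidating any side condition, $\pas$-soundness included. Your version merely spells out the bookkeeping (the deletion map $\Phi$, the \rn{Cycle} and freshness checks) that the paper's terser argument leaves implicit.
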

\begin{proof} Rule \rn{Out} does not add messages to the queue. 
If  $\pp\not\in\Plays{\G}$, then $\mq{\q}{\la}\pp$ cannot be added by  
Rule \rn{In}.  
Then $\mq{\q}{\la}\pp$ is present in all the queues of the judgements in the derivation. We remark that the removal of a message from a queue cannot alter the truth value of the $\pas$-soundness
condition, which is required for the applicability of an axiom or a rule. It is hence possible  to remove $\mq{\q}{\la}\pp$ from the queues in the axioms and modify the queues present in the derivation accordingly. 
\end{proof}

\begin{lemma}\label{aux}
If $\tyng\pas{\G}{\Nt\parN\Msg}$ and $\pp\in(\Plays{\Nt}\cap\pas)$,  then $\pp\in\Plays\G$.
\end{lemma}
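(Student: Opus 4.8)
The plan is to notice first that \refToLemma{aux} is nothing but a contrapositive reformulation of the first conclusion of \refToLemma{aa}: the statement $(\Plays{\Nt}\setminus\Plays{\G})\cap\pas=\emptyset$ says exactly that every participant in $\Plays{\Nt}\cap\pas$ belongs to $\Plays{\G}$. So \refToLemma{aux} follows immediately from \refToLemma{aa}, and in principle no further work is needed; the only question is whether one prefers to spell out a self-contained argument.

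If one does, I would proceed by cases on the last rule of the (history-free) derivation of $\tyng\pas{\G}{\Nt\parN\Msg}$. Axiom \rn{Cycle} cannot conclude such a derivation, since its applicability requires $\pair{\Nt\parN\Msg}{\G}$ to occur in the history, which here is empty. Axiom \rn{End} forces $\G=\End$ together with the side condition $\Plays{\Nt}\cap\pas=\emptyset$, so the hypothesis $\pp\in\Plays{\Nt}\cap\pas$ is vacuously false. For Rules \rn{Out} and \rn{In} the concluded global type $\G$ is, respectively, an output or an input global type whose topmost participant coincides with the head participant $\pr$ of the concluded network $\pP{\pr}{\PP}\parN\Nt_0$ (with $\PP\ne\inact$); hence $\pr\in\Plays{\G}$ and $\Plays{\Nt}=\set{\pr}\cup\Plays{\Nt_0}$. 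Therefore either $\pp=\pr\in\Plays{\G}$, or else $\pp\in\Plays{\Nt_0}\cap\pas$, and then the side condition $(\Plays{\Nt_0}\setminus\Plays{\G})\cap\pas=\emptyset$ occurring in the premises of \rn{Out}/\rn{In} gives $\pp\in\Plays{\G}$.

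I do not expect a genuine obstacle here: the argument inspects only the last rule, so no induction is needed. The only points demanding a little care are the bookkeeping facts used in the \rn{Out}/\rn{In} cases — that the active head participant of the processed network is the topmost participant of the output/input global type produced by the rule, hence a member of $\Plays{\G}$, and that the relevant side condition of those rules speaks of the network with the head removed — together with the remark that \rn{Cycle} is vacuously excluded once the history is empty. Since all of this is routine, I would ultimately prefer to present \refToLemma{aux} simply as a corollary of \refToLemma{aa}.
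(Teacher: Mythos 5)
Your proposal is correct, and your first observation is the sharpest route: the condition $(\Plays{\Nt}\setminus\Plays{\G})\cap\pas=\emptyset$ is literally equivalent to the statement of \refToLemma{aux}, so the lemma is an immediate corollary of the first conjunct of \refToLemma{aa}. The paper does not take this shortcut: its proof of \refToLemma{aux} is a one-line remark that an active output with sender $\pp\in\pas$ (resp.\ an input with receiver $\pp$ together with a matching message) can only be typed by Rule \rn{Out} (resp.\ Rule \rn{In}), which places $\pp$ among the players of $\G$. Read literally, that remark leaves implicit why the derivation cannot simply leave $\pp$'s process untouched; the missing ingredient is exactly the side condition $(\Plays{\Nt}\setminus\Plays{\G})\cap\pas=\emptyset$ of Rules \rn{Out} and \rn{In} together with the condition $\Plays{\Nt}\cap\pas=\emptyset$ of Axiom \rn{End}, which is what both your self-contained case analysis and the proof of \refToLemma{aa} rely on. So the two arguments rest on the same structural facts about the rules; yours is the more explicit one, and the reduction to \refToLemma{aa} is the more economical presentation. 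Your case analysis itself is sound as stated: \rn{Cycle} is excluded by the empty history, \rn{End} makes the hypothesis vacuous, and in \rn{Out}/\rn{In} the head participant is the topmost player of $\G$ while any other player in $\pas$ is handled by the side condition on the remainder of the network.
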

\begin{proof}
If $\pp\in\pas$, then an output with sender $\pp$ can only be typed by Rule $\rn{Out}$ and an input with receiver $\pp$ together with a message with receiver $\pp$ can only be typed by Rule $\rn{In}$. 
\end{proof}

 Subject Reduction ensures that a transition of a session is mimicked by a transition of the corresponding type configuration only if the player of the transition is a player of the global type.

\begin{theorem}[Subject Reduction]\label{thm:sr}
Let $\tyng\pas{\G}{\Nt\parN\Msg}$ and $\Nt\parN\Msg\stackred\beta\Nt'\parN\Msg'$. 
If $\play\beta\in\Plays\G$,  
then $\G\parG\Msg\stackred\beta\G'\parG\Msg'$ and $ \tyng\pas{\G'}{\Nt'\parN\Msg'}$.  Otherwise $ \tyng\pas{\G}{\Nt'\parN\Msg'}$.
\end{theorem}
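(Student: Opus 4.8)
The proof proceeds by case analysis on the last rule applied in the derivation of $\tyng\pas{\G}{\Nt\parN\Msg}$, mirroring the structure of the Session Fidelity proof but running in the opposite direction. The two axioms are handled first. If the last rule is \rn{End}, then $\G=\End$ and $\Plays\Nt\cap\pas=\emptyset$; since $\play\beta\in\Plays\Nt$ it follows that $\play\beta\notin\Plays\G$, so we are in the ``Otherwise'' case and must show $\tyng\pas\End{\Nt'\parN\Msg'}$. Here $\beta$ is either a send (which adds a message to the queue and leaves $\Plays{\Nt'}\subseteq\Plays\Nt$, and the $\pas$-soundness side condition is preserved because a message $\mq\pp\la\q$ with $\set{\pp,\q}\subseteq\pas$ cannot be created by a player outside $\pas$) or a receive (which removes a message and shrinks the network), so Axiom \rn{End} still applies. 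Axiom \rn{Cycle} is impossible here since the judgement under consideration has empty history.

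\textbf{Main case: Rule \rn{Out}.} Here $\G=\agtO{\pp}{\q}{i}{I}{\la}{\G}$, $\Nt=\pP\pp{\oup\q i I\la\PP}\parN\widehat\Nt$, and the premises give $\tyng\pas{\G_i}{\pP\pp{\PP_i}\parN\widehat\Nt\parN\Msg\cdot\mq\pp{\la_i}\q}$ for all $i\in I$ (after discarding histories via \refToLemma{key}(2)). I split on which participant moves. If the transition $\beta$ is performed by $\pp$ itself, it must be $\pp$ emitting some $\la_h$ (the only action its process offers), so $\play\beta=\pp\in\Plays\G$; then Axiom \rn{Top-Out} fires on the type side giving $\G\parG\Msg\stackred\beta\G_h\parG\Msg\cdot\mq\pp{\la_h}\q$, and the $h$-th premise is exactly the required typing. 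If instead $\beta$ is performed by some participant inside $\widehat\Nt$, then $\pp\ne\play\beta$, and the transition acts on $\pP\pp{\PP_i}\parN\widehat\Nt\parN\Msg\cdot\mq\pp{\la_i}\q$ for each $i$ without touching $\pp$'s process and without depending on the trailing messages $\mq\pp{\la_i}\q$ (they stay at the end of the queue). I then invoke the induction hypothesis on each premise: either $\play\beta\in\Plays{\G_i}$ for all $i$, whence Rule \rn{Inside-Out} assembles $\G\parG\Msg\stackred\beta\agtO\pp\q i I\la{\G'}\parG\Msg'$ and \refToLemma{key}(3) repackages the typed premises into the conclusion; or $\play\beta\notin\Plays{\G_i}$ for all $i$, in which case the induction hypothesis gives $\tyng\pas{\G_i}{\pP\pp{\PP_i}\parN\widehat\Nt'\parN\Msg'\cdot\mq\pp{\la_i}\q}$ and \refToLemma{key}(3) yields $\tyng\pas\G{\pP\pp\PP\parN\widehat\Nt'\parN\Msg'}$, which is the ``Otherwise'' conclusion. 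Rule \rn{In} is treated entirely symmetrically, using Axiom \rn{Top-In} / Rule \rn{Inside-In} and \refToLemma{key}(2),(4).

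\textbf{The delicate point.} The subtlety, and the reason the statement is phrased with the dichotomy, is the case where $\play\beta\notin\Plays\G$ but $\beta$ nonetheless changes the network in a way that must be shown harmless for the existing derivation. This is where Lemmas~\ref{a}, \ref{b}, and \ref{aux} do the real work, and I must argue carefully that $\play\beta\notin\Plays\G$ is actually \emph{forced} in the relevant subcases rather than merely assumed. Concretely: if $\beta$ is a send by some $\pr\notin\Plays\G$, then by \refToLemma{aux} $\pr\notin\pas$, so the new message $\mq\pr{\la}\ps$ has a sender outside $\pas$ and cannot violate $\pas$-soundness; \refToLemma{a} lets me replace $\pr$'s process throughout the derivation (it only ever appears in axioms), giving the typing for $\Nt'$. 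If $\beta$ is a receive by $\pr\notin\Plays\G$ consuming $\mq\ps\la\pr$ from the queue, I first use \refToLemma{b} to delete that message from every queue in the derivation, then \refToLemma{a} to update $\pr$'s process. The one thing I need to nail down is that a participant $\pr$ with $\pr\notin\Plays\G$ cannot be forced into $\pas$: this is exactly \refToLemma{aux}. I expect the bookkeeping around $\pas$-soundness under message deletion/addition (already isolated in the proofs of Lemmas~\ref{b} and the \rn{End} case) to be the only place requiring genuine care; everything else is a routine recursion matching the Session Fidelity argument.
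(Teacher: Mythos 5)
Your overall decomposition matches the paper's: case analysis on the typing rule, Axiom \rn{Top-Out}/\rn{Top-In} when the mover is the subject of the head communication of $\G$, Rules \rn{Inside-Out}/\rn{Inside-In} plus \refToLemma{key}(3)/(4) otherwise, and Lemmas~\ref{a}, \ref{b}, \ref{aux} for the ``Otherwise'' branch (the paper actually dispatches $\play\beta\notin\Plays\G$ once, up front, via Lemmas~\ref{a} and~\ref{b}, rather than threading it through the recursion, but that is a cosmetic difference). However, there is a genuine gap: you never identify a well-founded induction measure, and the one you implicitly appeal to does not exist. You say the argument is ``a routine recursion matching the Session Fidelity argument,'' but Session Fidelity inducts on the height of the \emph{given} derivation of $\G\parG\Msg\stackred\co\G'\parG\Msg'$; in Subject Reduction that transition is what you are constructing, so there is nothing to recurse on there. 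Nor can you induct on the typing derivation: the premises of \rn{Out}/\rn{In} carry non-empty histories, and the history-free judgements you obtain from them via \refToLemma{key}(2) are produced by a global transformation whose derivations are not subderivations of (and may be larger than) the original. The paper's proof instead inducts on $d=\weight(\G,\play\beta)$, which strictly decreases when passing from $\G$ to its continuations and is finite precisely because global types in judgements are required to be bounded. This is where boundedness earns its keep, and it is absent from your argument.

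The same omission undermines a second step. In the \rn{Out} case with an inner mover you assert the dichotomy ``either $\play\beta\in\Plays{\G_i}$ for all $i$, or $\play\beta\notin\Plays{\G_i}$ for all $i$,'' and you need the first alternative to hold uniformly because Rule \rn{Inside-Out} demands the type-side transition in \emph{every} branch $\G_i$. A priori the player could occur in some branches and not others, in which case neither of your two alternatives applies and \rn{Inside-Out} cannot fire. The dichotomy is in fact forced, but only by finiteness of $\weight(\G,\play\beta)$: if $\play\beta\in\Plays\G$ and the head communication is not played by $\play\beta$, then every path of $\G$ must reach a $\play\beta$-communication within a uniform finite bound, hence $\play\beta\in\Plays{\G_i}$ for every $i$. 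You need to state and use this. With the measure $\weight(\G,\play\beta)$ and this observation added, the rest of your bookkeeping (the $\pas$-soundness preservation under message addition/deletion, and the use of \refToLemma{aux} to ensure $\pr\notin\Plays\G$ implies $\pr\notin\pas$) is sound and agrees with the paper.
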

\begin{proof} The proof is by cases on the reduction rules.

{\bf Rule} $\rn{Send}$. In this case 

\ClineL{
\text{$\Nt\equiv\pP{\pp}{\oup\q{i}{I}{\la}{\PP}}\parN\Nt_0$, \quad $\Msg'\equiv\addMsg{\Msg}{\mq{\pp}{\la_h}\q}$, \quad $\beta=\CommAs\pp{\la_h}{\q}$, \quad $\Nt'\equiv\pP{\pp}{\PP_h}\parN\Nt_0$  \quad where  $h \in I$.}
} 

\noindent
By definition of network $\pp\not\in\plays{\oup\q{i}{I}{\la}{\PP}}$, which implies $\pp\not\in\plays{\PP_h}$. 
If $\pp\not\in\Plays\G$, Lemma \ref{a} implies  
$ \tyng\pas{\G}{\Nt'\parN\Msg'}$.  Otherwise the proof proceeds by cases on the last typing axiom/rule used in the derivation for $\tyng\pas{\G}{\Nt\parN\Msg}$ and by induction on $d=\weight(\G,\pp)$.\\
{\em Axiom $\rn{End}$.} Since it cannot be $\play\beta\in\Plays\G$, the implication is vacuously satisfied.\\
{\em  Axiom $\rn{Cycle}$.}  Impossible since the history cannot be empty.\\ 
{\em Rule $\rn{Out}$ and $d=1$.} In this case $\G=\agtO{\pp}{\q}{i}{I}{\la}{\G}$. We get $\G\parN\Msg\stackred{\CommAs\pp{\la_h}{\q}}\G_h\parN\Msg'$ by Axiom $\rn{Top-Out}$. Lemma \ref{key}(\ref{key2}) implies 

\ClineL{ \tyng\pas{\G_h}{\pP\pp{\PP_h}\parN\Nt_0\parN\addMsg{\Msg}{\mq{\pp}{\la_h}\q}}}

\noindent
{\em Rule $\rn{Out}$ and  $d>1$.} In this case $\G=\agtO{\pr}{\ps}{j}{J}{\la'}{\G}$ with $\pr\neq\pp$ and $\Nt_0\equiv\pP{\pr}{\oup\ps{j}{J}{\la'}{\R}}\parN\Nt_1$. Lemma \ref{key}(\ref{key2}) implies 

\ClineL{
\text{
$\tyng\pas{\G_j}{\pP{\pp}{\oup\q{i}{I}{\la}{\PP}}\parN\pP\pr{\R_j}\parN\Nt_1\parN\addMsg{\Msg}{\mq{\pr}{\la'_j}\ps}}$ \quad for all $j\in J$ }
}

\noindent
We hence get, by Rule \rn{Send}, for all $j\in J$,

\ClineL{
\pP{\pp}{\oup\q{i}{I}{\la}{\PP}}\parN\pP\pr{\R_j}\parN\Nt_1\parN\addMsg{\Msg}{\mq{\pr}{\la'_j}\ps}\stackred{\CommAs\pp{\la_h}{\q}}\pP{\pp}{\PP_h}\parN\pP\pr{\R_j}\parN\Nt_1\parN\addMsg{\Msg}{\mq{\pr}{\la'_j}\ps}\cdot\mq{\pp}{\la_h}\q
}

\noindent
Since $\weight(\G_j,\pp)<d$, induction implies 
 $\G_j\parN\addMsg{\Msg}{\mq{\pr}{\la'_j}\ps}\stackred{\CommAs\pp{\la_h}{\q}}\G'_j\parN\addMsg{\Msg}{\mq{\pr}{\la'_j}\ps}\cdot\mq{\pp}{\la_h}\q$ 
 and 
 
 \ClineL{
\text{ $\tyng\pas{\G'_j}{\pP{\pp}{\PP_h}\parN\pP\pr{\R_j}\parN\Nt_1\parN\addMsg{\Msg}{\mq{\pr}{\la'_j}\ps}\cdot\mq{\pp}{\la_h}\q}$
 \quad for all $j\in J$}
 } 
 
 \noindent
 Let $\G'=\agtO{\pr}{\ps}{j}{J}{\la'}{\G'}$ and  $\Msg' = \Msg\cdot\mq{\pp}{\la_h}\q$. 
  Since the messages $\mq{\pr}{\la'_j}\ps$ and $\mq{\pp}{\la_h}\q$ commute, being $\pr\neq\pp$, we can derive $\G\parN\Msg\stackred{\CommAs\pp{\la_h}{\q}}\G'\parN\Msg'$ using  Rule $\rn{Inside-Out}$.  Lastly, $\tyng\pas{\G'}{\Nt'\parN\Msg'}$  by Lemma \ref{key}(\ref{key3}).\\
 {\em Rule $\rn{In}$ and $d=1$.} Impossible.\\
 {\em Rule $\rn{In}$ and $d>1$.}  In this case 
 
 \ClineL{
 \text{$\G=\agtI{\pr}{\ps}{j}{J}{\la'_k}{\G''}$ with $\pr\neq\pp$ and $\Nt_0\equiv\pP{\pr}{\inp\ps{j}{J}{\la'}{\R}}\parN\Nt_1$ 
 and $\Msg\equiv\mq\ps{\la'_k}\pr\cdot\Msg_0$ with $k\in J$.}
 }
 
 \noindent
Moreover, $\tyng\pas{\G''}{\pP{\pp}{\oup\q{i}{I}{\la}{\PP}}\parN\pP\pr{\R_k}\parN\Nt_1\parN\Msg}$  by Lemma \ref{key}(\ref{key2}). 
We get 

 \ClineL{
 \pP{\pp}{\oup\q{i}{I}{\la}{\PP}}\parN\pP\pr{\R_k}\parN\Nt_1\parN\Msg\stackred{\CommAs\pp{\la_h}{\q}}\pP{\pp}{\PP_h}\parN\parN\pP\pr{\R_k}\parN\Nt_1\parN\Msg\cdot\mq{\pp}{\la_h}\q
 }
 
 \noindent
  Since $\weight(\G'',\pp)<d$, induction implies
  
  \ClineL{
 \G''\parN\Msg\stackred{\CommAs\pp{\la_h}{\q}}\G'''\parN\Msg\cdot\mq{\pp}{\la_h}\q
 \quad\text{and} \quad
 \tyng\pas{\G'''}{\pP{\pp}{\PP_h}\parN\pP\pr{\R_k}\parN\Nt_1\parN\Msg\cdot\mq{\pp}{\la_h}\q}
 }
 
 \noindent
 Let $\G'=\agtI{\pr}{\ps}{j}{J}{\la'_k}{\G'''}$.
 Being $\pr\neq\pp$ we can derive $\G\parN\Msg\stackred{\CommAs\pp{\la_h}{\q}}\G'\parN\Msg'$ using  Rule $\rn{Inside-In}$.  Lastly, $\tyng\pas{\G'}{\Nt'\parN\Msg'}$  by Lemma \ref{key}(\ref{key4}).
 
{\bf Rule} $\rn{Rcv}$. In this case

\ClineL{
\text{$\Nt\equiv\pP{\pp}{\inp\q{i}{I}{\la}{\PP}}\parN\Nt_0$,\quad $\Msg\equiv\addMsg{\mq{\q}{\la_h}\pp}{\Msg'}$,\quad $\beta=\CommAsI\pp{\la_h}{\q}$,\quad $\Nt'\equiv\pP{\pp}{\PP_h}\parN\Nt_0$ \quad where  $h \in I$.}
}

\noindent
By definition of network $\pp\not\in\plays{\inp\q{i}{I}{\la}{\PP}}$, which implies $\pp\not\in\plays{\PP_h}$. 
  If $\pp\not\in\plays\G$  Lemmas \ref{a} and \ref{b} imply  $ \tyng\pas{\G}{\Nt'\parN\Msg'}$. Otherwise the proof proceeds by cases on the last axiom/rule used in the derivation for $\tyng\pas{\G}{\Nt\parN\Msg}$ and by induction on  $d=\weight(\G,\pp)$.\\
{\em Axiom $\rn{End}$}. Since it cannot be $\play\beta\in\plays\G$, the implication is vacuously satisfied.\\
{\em Axiom $\rn{Cycle}$.}  Impossible since the history cannot be empty. \\
{\em Rule $\rn{Out}$ and $d=1$.} Impossible. \\
{\em Rule $\rn{Out}$ and  $d>1$.}
 In this case $\G=\agtO{\pr}{\ps}{j}{J}{\la'}{\G}$ with $\pr\neq\pp$ and $\Nt_0\equiv\pP{\pr}{\oup\ps{j}{J}{\la'}{\R}}\parN\Nt_1$ and $\tyng\pas{\G_j}{\pP{\pp}{\inp\q{i}{I}{\la}{\PP}}\parN\pP\pr{\R_j}\parN\Nt_1\parN\addMsg{\Msg}{\mq{\pr}{\la'_j}\ps}}$ for all $j\in J$ by Lemma \ref{key}(\ref{key2}). 
We get, for all $j\in J$,

\ClineL{
\pP{\pp}{\inp\q{i}{I}{\la}{\PP}}\parN\pP\pr{\R_j}\parN\Nt_1\parN\addMsg{\Msg}{\mq{\pr}{\la'_j}\ps}
\stackred{\CommAsI\pp{\la_h}{\q}}
\pP{\pp}{\PP_h}\parN\pP\pr{\R_j}\parN\Nt_1\parN\addMsg{\Msg'}{\mq{\pr}{\la'_j}\ps}
}

\noindent
 Since $\weight(\G_j,\pp)<d$, induction implies,  for all $j\in J$,
 
 \ClineL{
 \G_j\parN\addMsg{\Msg}{\mq{\pr}{\la'_j}\ps}\stackred{\CommAsI\pp{\la_h}{\q}}\G'_j\parN\addMsg{\Msg'}{\mq{\pr}{\la'_j}\ps}
 \quad\text{and} \quad
 \tyng\pas{\G'_j}{\pP{\pp}{\PP_h}\parN\pP\pr{\R_j}\parN\Nt_1\parN\addMsg{\Msg'}{\mq{\pr}{\la'_j}\ps}}
 }
 
 \noindent
 Let 
 $\G'=\agtO{\pr}{\ps}{j}{J}{\la'}{\G'}$.  Being $\pr\neq\pp$ 
  we can derive $\G\parN\Msg\stackred{\CommAsI\pp{\la_h}{\q}}\G'\parN\Msg'$ using  Rule $\rn{Inside-Out}$. Lastly, $\tyng\pas{\G'}{\Nt'\parN\Msg'}$ by Lemma \ref{key}(\ref{key3}).\\
{\em Rule $\rn{In}$ and $d=1$.}
 In this case $\G=\agtI{\pp}{\q}{i}{I}{\la}{\G'}$. Lemma \ref{key}(\ref{key2}) implies $\tyng\pas{\G'}{\pP\pp{\PP_h}\parN\Nt_0\parN\Msg'}$. 
We get $\G\parN\Msg\stackred{\CommAsI\pp{\la_h}{\q}}\G'\parN\Msg'$ by  Axiom  $\rn{Top-In}$. \\
{\em Rule $\rn{In}$ and $d>1$.}  
In this case 

\ClineL{
\text{$\G=\agtI{\pr}{\ps}{j}{J}{\la'_k}{\G''}$\quad $\Nt_0\equiv\pP{\pr}{\inp\ps{j}{J}{\la'}{\R}}\parN\Nt_1$,\quad $\Msg'\equiv\mq\ps{\la'_k}\pr\cdot\Msg_0$\quad  with $\pr\neq\pp$ and $k\in J$}
 }

\noindent
Lemma \ref{key}(\ref{key2}) implies $\tyng\pas{\G''}{\pP{\pp}{\inp\q{i}{I}{\la}{\PP}}\parN\pP\pr{\R_k}\parN\Nt_1\parN  \mq\q{\la_h}\pp\cdot \Msg_0}$. 
 We get 
 
 \ClineL{
 \pP{\pp}{\inp\q{i}{I}{\la}{\PP}}\parN\pP\pr{\R_k}\parN\Nt_1\parN\mq\q{\la_h}\pp\cdot\Msg_0
 \stackred{\CommAsI\pp{\la_h}{\q}}
 \pP{\pp}{\PP_h}\parN\pP\pr{\R_k}\parN\Nt_1\parN\Msg_0
 } 
 
 \noindent
 Since $\weight(\G'',\pp)<d$, induction implies
 
 \ClineL{
 \G''\parN\mq\q{\la_h}\pp\cdot\Msg_0\stackred{\CommAsI\pp{\la_h}{\q}}\G'''\parN\Msg_0
  \quad\text{and} \quad
  \tyng\pas{\G'''}{\pP{\pp}{\PP_h}\parN\pP\pr{\R_k}\parN\Nt_1\parN\Msg_0}
  } 
  
  \noindent
 Let $\G'=\agtI{\pr}{\ps}{j}{J}{\la_k'}{\G'''}$. Being $\pr\neq\pp$ we can derive $\G\parN\Msg\stackred{\CommAsI\pp{\la_h}{\q}}\G'\parN\Msg'$ using  Rule $\rn{Inside-In}$. Lastly, $\tyng\pas{\G'}{\Nt'\parN\Msg'}$ by Lemma \ref{key}(\ref{key4}).
\end{proof}
 
  We conclude  this section by showing the main properties of our type system: partial lock-freedom and partial orphan-message-freedom. 
 
\begin{theorem}[Partial Lock-freedom]\label{thm:aplf}
If $\tyng\pas{\G}{\Nt\parN\Msg}$,  then $\Nt\parN\Msg$ is $\pas$-lock free.  
\end{theorem}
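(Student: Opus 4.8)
The plan is to combine Subject Reduction with an induction on the depth of $\pp$ in the global type. First I would note that, by iterating \refToTheorem{thm:sr}, from $\tyng\pas{\G}{\Nt\parN\Msg}$ and $\Nt\parN\Msg\stackred{\comseqA}\Nt'\parN\Msg'$ one obtains $\tyng\pas{\G^*}{\Nt'\parN\Msg'}$ for some (necessarily bounded) global type $\G^*$: each single step either leaves the global type unchanged or reduces it via the configuration LTS, and in both cases a typing of the residual session survives. Hence, by \refToDef{d:lf}, it suffices to establish the following local statement: \emph{whenever $\tyng\pas{\G}{\Nt\parN\Msg}$, $\pp\in\pas$ and $\pP\pp\PP\in\Nt$, there exist $\comseqA'$ and $\beta$ with $\pp\in\play\beta$ and $\Nt\parN\Msg\stackred{\concat{\comseqA'}\beta}$.}

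To prove this statement I would argue by induction on $d=\weight(\G,\pp)$. Since $\pP\pp\PP\in\Nt$ forces $\pp\in\Plays{\Nt}\cap\pas$, \refToLemma{aux} gives $\pp\in\Plays{\G}$, and boundedness of $\G$ makes $d$ a finite integer with $d\ge 1$. As the judgement has empty history, its derivation ends with Rule $\rn{Out}$ or Rule $\rn{In}$ --- not Axiom $\rn{End}$, since $\Plays{\G}\ne\emptyset$, and not Axiom $\rn{Cycle}$, since the history is empty.

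If it ends with $\rn{Out}$, then $\G=\agtO{\pr}{\ps}{i}{I}{\la}{\G}$ and $\Nt\equiv\pP\pr{\oup{\ps}{i}{I}{\la}{\R}}\parN\Nt_0$. When $\pr=\pp$, a single application of Rule $\rn{Send}$ yields the transition with $\beta=\CommAs\pp{\la_h}\ps$ for any $h\in I$, and we are done. When $\pr\ne\pp$, I pick any $h\in I$, fire Rule $\rn{Send}$ to reach $\pP\pr{\R_h}\parN\Nt_0\parN\addMsg{\Msg}{\mq\pr{\la_h}\ps}$ --- in which $\pP\pp\PP$ is still active since $\pr\ne\pp$ --- use \refToLemma{key}(\ref{key2}) on the corresponding premise of $\rn{Out}$ to obtain a \emph{history-free} typing of this session by $\G_h$, and apply the induction hypothesis, prepending the $\rn{Send}$-step to the resulting trace. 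The case of Rule $\rn{In}$ is symmetric: $\G=\agtI{\pr}{\ps}{i}{I}{\la_k}{\G'}$, $\Nt\equiv\pP\pr{\inp{\ps}{i}{I}{\la}{\R}}\parN\Nt_0$ and $\Msg\equiv\mq\ps{\la_k}\pr\cdot\Msg_0$ with $k\in I$; if $\pr=\pp$ one application of Rule $\rn{Rcv}$ suffices (with $\beta=\CommAsI\pp{\la_k}\ps$), and if $\pr\ne\pp$ I fire Rule $\rn{Rcv}$, extract a history-free typing of the residual session by $\G'$ via \refToLemma{key}(\ref{key2}), and apply the induction hypothesis as before.

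The recursion is legitimate because in the cases $\pr\ne\pp$ the fired action is performed by a participant different from $\pp$, so $\pp$ stays active in the reduced network, and because the residual global type $\G_h$ (respectively $\G'$) is the body of $\G$ under one layer of communication whose player is $\pr\ne\pp$; reading off \refToDef{def:depth}, this forces $\weight(\G_h,\pp)<d$ (respectively $\weight(\G',\pp)<d$). This strict decrease, which is the only point where boundedness of $\G$ is actually used, is what I expect to be the main and essentially the only subtle step: the remainder is a direct reading of the typing rules together with \refToLemma{key}(\ref{key2}), which is precisely the device that lets us discard the histories generated inside the derivation before invoking the induction hypothesis.
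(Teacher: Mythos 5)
Your proposal is correct, and its core --- an induction on $d=\weight(\G,\pp)$, enabled by \refToLemma{aux} (to get $\pp\in\Plays\G$), boundedness (to make $d$ finite), and \refToLemma{key}(\ref{key2}) (to strip histories off the premises of $\rn{Out}$/$\rn{In}$ before recurring) --- is exactly the paper's. The two differences are where the induction lives and how the general case is discharged. The paper runs the depth induction on the \emph{type configuration}: it shows $\G\parG\Msg\stackred{\concat{\comseqA}{\co}}$ with $\play\co=\pp$ using Axioms $\rn{Top\text{-}Out}$/$\rn{Top\text{-}In}$, and then transfers the whole trace to the session in one appeal to Session Fidelity (\refToTheorem{thm:sf}); you instead fire $\rn{Send}$/$\rn{Rcv}$ directly on the session and re-type each residual, which is equally sound but re-proves by hand the matching of steps that Session Fidelity packages for you. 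Conversely, you are more careful than the paper on the outer layer: Definition~\ref{d:lf} quantifies over all reachable sessions $\Nt'\parN\Msg'$, and your explicit appeal to iterated Subject Reduction to obtain $\tyng\pas{\G^*}{\Nt'\parN\Msg'}$ before invoking the local statement is a step the paper's proof leaves entirely implicit (it only treats the initial session and never mentions \refToTheorem{thm:sr}). Your justification of the strict decrease $\weight(\G_h,\pp)<d$ when $\pr\ne\pp$ is also right, and it is worth noting (as you implicitly use) that boundedness further guarantees $\pp\in\Plays{\G_h}$ for \emph{every} branch $h$, so the induction hypothesis is applicable in each case.
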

\begin{proof} Let $\pp\in\pas$. If $\pp\not\in\Plays{\Nt}$, then $\Nt\parN\Msg$ is trivially $\pp$-lock free.
Otherwise $\pp\in(\Plays{\Nt}\cap\pas)$  gives
 $\pp\in\Plays\G$  by Lemma~\ref{aux}.  
 We first show by induction on $d=\weight(\G,\pp)$ that $\G\parN\Msg\stackred{\comseqA\cdot\co}$ with $\play\co=\pp$ for some $\comseqA$, $\co$.\\
If $d=1$, then either $\G=\agtO{\pp}{\q}{i}{I}{\la}{\G}$ or  $\G=\agtI{\pp}{\q}{i}{I}{\la}{\G'}$.  We get $\G\parN\Msg\stackred{\co}$ with $\play\co=\pp$ by either  Axiom  $\rn{Top-Out}$ or  Axiom  $\rn{Top-In}$.\\
If $d>1$, then $\G\parN\Msg\stackred{\co'}\G'\parN\Msg'$ for some $\co'$, $\G'$ and $\Msg'$ by  Axiom  $\rn{Top-Out}$ or  Axiom  $\rn{Top-In}$. The applicability of  Axiom  $\rn{Top-In}$ is ensured by the fact that $\tyng\pas{\G}{\Nt\parN\Msg}$ must be typed using Rule $\rn{In}$. Since $\weight(\G',\pp)< d$, by induction $\G'\parN\Msg'\stackred{\comseqA'\cdot\co}$ with $\play\co=\pp$ for some $\comseqA$, $\co$. We can take $\comseqA=\co'\cdot\comseqA'$. 

By Theorem \ref{thm:sf} $\G\parN\Msg\stackred{\comseqA\cdot\co}$ implies $\Nt\parN\Msg\stackred{\comseqA\cdot\co}$.
\end{proof}

\begin{theorem}[Partial Orphan-message-freedom]\label{thm:pomf}
If $\tyng\pas{\G}{\Nt\parN\Msg}$,  then $\Nt\parN\Msg$ is $\pas$-orphan-message free.  
\end{theorem}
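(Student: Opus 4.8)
The plan is to reduce $\pas$-orphan-message-freedom to the $\pas$-soundness invariant carried by the derivation, using Subject Reduction, Session Fidelity, and the weight function of Definition~\ref{wm}. Fix $\pp,\q\in\pas$; by Definition~\ref{def:pomf} it suffices to show $\Nt\parN\Msg$ is $\pp\q$-orphan-message-free, so assume $\Nt\parN\Msg\stackred{\comseqA}\Nt'\parN\mq\pp\la\q\cdot\Msg'$; I have to exhibit $\comseqA'$ with $\Nt'\parN\mq\pp\la\q\cdot\Msg'\stackred{\comseqA'\cdot\q\pp?\la}$. First I iterate Subject Reduction (Theorem~\ref{thm:sr}) along the steps of $\comseqA$: each step either leaves the global type unchanged or replaces it by one of its reducts, and in both cases typability is preserved, so $\tyng\pas{\G_0}{\Nt'\parN\mq\pp\la\q\cdot\Msg'}$ for some bounded global type $\G_0$. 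By Lemma~\ref{aa} the type configuration $\G_0\parG\mq\pp\la\q\cdot\Msg'$ is $\pas$-sound, and since $\mq\pp\la\q$ occurs in its queue with $\set{\pp,\q}\subseteq\pas$, Definition~\ref{ps} tells us that $\wgs{\mq\pp\la\q}{\G_0}$ is finite.

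The core of the argument is the following claim, proved by induction on $n=\wgs{\mq\pp\la\q}\G$: \emph{if $\tyng\pas\G{\Nt\parN\mq\pp\la\q\cdot\Msg}$ and $n$ is finite, then $\Nt\parN\mq\pp\la\q\cdot\Msg\stackred{\comseqA'\cdot\q\pp?\la}$ for some $\comseqA'$.} Finiteness of $n$ excludes $\G=\End$ and $\G=\q\pp?\la'.\G'$ with $\la'\neq\la$ (both carry weight $\infty$), so $\G$ has one of three shapes. If $n=0$ then $\G=\q\pp?\la.\G'$, and Axiom~\rn{Top-In} gives $\G\parG\mq\pp\la\q\cdot\Msg\stackred{\q\pp?\la}\G'\parG\Msg$, whence Session Fidelity (Theorem~\ref{thm:sf}) yields the corresponding session step and $\comseqA'=\ee$ works. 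If $\G=\pr\ps!\set{\la_i.\G_i}_{i\in I}$ is an output, then --- $I$ being finite and $n$ finite --- every $\wgs{\mq\pp\la\q}{\G_i}$ is finite and strictly smaller than $n$; I fire Axiom~\rn{Top-Out} for an arbitrary fixed $i$, obtaining $\G\parG\mq\pp\la\q\cdot\Msg\stackred{\CommAs\pr{\la_i}\ps}\G_i\parG\mq\pp\la\q\cdot\Msg\cdot\mq\pr{\la_i}\ps$; Session Fidelity produces a matching session step preserving typability, the message $\mq\pp\la\q$ is still at the head of the new queue, so the induction hypothesis applies and the two reductions concatenate. If $\G=\pr\ps?\la'.\G'$ is an input other than the one matching $\mq\pp\la\q$, then the last rule deriving $\tyng\pas\G{\Nt\parN\mq\pp\la\q\cdot\Msg}$ must be \rn{In} (it cannot be \rn{End}, nor \rn{Cycle} because the history is empty), so by the shape of that rule the queue already contains the message consumed by this input, on a channel different from that of $\mq\pp\la\q$; I commute that message to the front using the structural equivalence on queues, apply Axiom~\rn{Top-In} --- which leaves $\mq\pp\la\q$ in head position and passes to $\G'$ with $\wgs{\mq\pp\la\q}{\G'}=n-1$ --- and conclude again by Session Fidelity and the induction hypothesis.

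Instantiating the claim with $\tyng\pas{\G_0}{\Nt'\parN\mq\pp\la\q\cdot\Msg'}$ completes the proof. Observe that only Axioms~\rn{Top-Out} and~\rn{Top-In} are used to build the witnessing reduction, while Rules~\rn{Inside-Out} and~\rn{Inside-In} enter only implicitly, through Session Fidelity. The step I expect to be the main obstacle is the unrelated-input case: one must combine the side condition of Rule~\rn{In} with the queue equivalence to guarantee that the input blocking at the top of $\G$ really consumes a message already present in the queue on a \emph{different} channel --- so that the type configuration, hence (via Session Fidelity) the session, can always move past it while $\mq\pp\la\q$ stays at the head --- and it is precisely this ``$\mq\pp\la\q$ remains the head throughout the reduction we are constructing'' bookkeeping that requires care.
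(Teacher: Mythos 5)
Your proof is correct and follows essentially the same route as the paper's: an induction on the weight $\wgs{\mq\pp\la\q}{\G}$ (made finite by $\pas$-soundness, via Lemma~\ref{aa}) showing that the type configuration can always fire \rn{Top-Out} or \rn{Top-In} until the matching input $\q\pp?\la$ occurs, with Session Fidelity transferring the reduction to the session. You are merely more explicit than the paper on two points it leaves implicit --- the initial iteration of Subject Reduction to reach the reduct where $\mq\pp\la\q$ sits at the head of the queue, and the case analysis justifying that \rn{Top-In} is always enabled because the judgement must end with Rule~\rn{In}.
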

\begin{proof} Let  $\Msg\equiv\mq\pp\la\q\cdot\widehat\Msg$  and $\set{\pp,\q}\subseteq\pas$.
 We first show that $\G\parG\Msg\stackred{\comseqA\cdot\q\pp?\la}$ by induction on $\wgs{\mq\pp\la\q}\G$. If $\wgs{\mq\pp\la\q}\G=0$ it is trivial. 
 Otherwise $\G\parG\Msg\stackred{\co}\G'\parG\Msg'$ by  Axiom  $\rn{Top-Out}$ or  Axiom  
 $\rn{Top-In}$ and $\wgs{\mq\pp\la\q}{\G'}<\wgs{\mq\pp\la\q}\G$. The applicability of  Axiom  $\rn{Top-In}$ is ensured by the fact that $\tyng\pas{\G}{\Nt\parN\Msg}$ must be typed using Rule $\rn{In}$. By induction $\G'\parG\Msg'\stackred{\comseqA'\cdot\q\pp?\la}$, so we can take $\comseqA=\co\cdot\comseqA'$. 

Applying Theorem~\ref{thm:sf} to $\G\parG\Msg\stackred{\comseqA\cdot\q\pp?\la}$ we conclude $\Nt\parN\Msg\stackred{\comseqA\cdot\q\pp?\la}$.
\end{proof}

It is worth noticing that  in case we were interested in 
$\pas$-lock-freedom only  we could simply take out  
the $\pas$-soundness conditions in the type system.

\begin{remark}[Saving $\pas$-soundness checks]{\em
One could avoid to  have the  $\pas$-soundness condition 
in Rules $\rn{In}$
and $\rn{Out}$ in case we impose $\Msg=\emptyset$ in  Axiom  $\rn{Cycle}$. 
In fact $(a)$ $\G\parN\emptyset$  is $\pas$-sound for any $\G$ and $\pas$ and
$(b)$ applications of Rules $\rn{In}$ and $\rn{Out}$ do preserve $\pas$-soundness.
Note that requiring $\G\parN\Msg$  to be $\pas$-sound  only in  Axioms  $\rn{Cycle}$
and $\rn{End}$ would not work.
A counterexample being the obvious derivation for \\
\Cline{\tyng\pas\G{\pP\pp\PP \parN \pP\q\Q \parN \mq{\q}{\lambda}{\pp}}
}
where
$\PP = \q!\lambda.\PP$,  $\Q = \pp?\lambda.\Q$  and   $\G = \pp\q!.\q\pp?\lambda.\G$.
}
\end{remark}

 \section{Conclusions}\label{conc}

Membership of a component to a concurrent/distributed system 
does not imply  that the component is 
equivalent in rights, capabilities and properties to the other components.
A system can often  viewed as being formed by 
different and  heterogeneous  
subsystems. 
Formal verification techniques and methods are usually devised  to ensure
properties of whole systems and they cannot always be scaled down or tailored to work on
specific subsystems.
This is obviously due to non trivial interactions between subsystems and the rest of system
components.
This issue has been addressed in \cite{BD23}, in the development/verification 
framework of  MPTS.  
The type assignment of~\cite{BD23},  guaranteeing good
communication properties, can be in fact tailored for specific subsets of participants, so disregarding
the behaviour of the rest of the participants. 
In the present paper we extend the investigation in \cite{BD23}
by  considering an asynchronous model of communication, which was instead
synchronous in \cite{BD23}. 
With respect to that paper we consider, besides $\pas$-lock-freedom
(absence of locks for participants in $\pas$),  also  $\pas$-orphan-message freedom.
The type assignment we devise is inspired by~\cite{CDG21,CDG22,DGD22} where,
unlike most choreographic formalisms, the 
asynchronicity of the communication model is explicitly reflected
at the level of global-behaviour descriptions, namely the global types in our case.


A MPST formalism dealing  with  properties holding for partial descriptions of systems
was defined in~\cite{HY17} and  further investigated  in~\cite{CDG18,CDGH19}.
In those papers, a notion of {\em connecting communications} enables us to consider 
some participants as optional, in particular the ones that are ``invited'' (via connecting inputs) to join some interactions.
Such a feature allows for a more natural description of typical communication protocols.
Connecting communications and our partial typing are sort of orthogonal.
An advantage of connecting communications over partial typing 
(where participants offering connecting communications should be ignored) is that only participants offering connecting inputs can be stuck. The disadvantage is that the typing rules are more demanding, so many interesting sessions can be partially typed but cannot be typed using connecting communications. 
We definitely deem worth investigating an extension of our formalism to deal with
participants offering connecting communications.

An algorithm enabling to infer all the global types for a given session
-- and handling, in particular, infinite expressions as sets of recursive equations -- has been devised in \cite{BD23}, working on a similar one in \cite{DGD22}.
We are confident that the approach of \cite{DGD22}, for what concerns the representation of infinite terms, can be also exploited in inference algorithms for our system.

The MPTS  formalism used in the present paper, 
unlike many MPST formalisms stemmed from ~\cite{Honda2016},
does not recur to projections.
Extending the standard projection operator to a relation between global types
and local behaviours with good partial properties would lead to a 
{\em top-down} development/verification formalism for partial properties,
 i.e. where local descriptions are  obtained  
 by projecting
previously developed global descriptions.

The properties verified by formalisms like the present one, as well as the ones in
~\cite{BD23,CDG21,CDG22,DGD22}, are strictly related to LTSs on  type configurations. 
Such LTSs are inductively defined. It is worth considering
coinductively defined LTSs, so that communication properties can be ensured for wider sets of sessions.

\bigskip

{\bf Acknowledgements} We are grateful to the anonymous referees for their comments and suggestions to improve the
readability of this paper.

\bibliographystyle{eptcs}
\bibliography{session}

\end{document}